\documentclass[a4paper,UKenglish]{article}
 
\usepackage{microtype}
\bibliographystyle{alpha}
\usepackage{amsmath, amssymb, amsthm}
\usepackage{geometry}
 \geometry{
 a4paper,
 total={210mm,297mm},
 left=20mm,
 right=20mm,
 top=20mm,
 bottom=20mm,
 }
 \usepackage{scrextend}

\title{Towards
  Better Separation between Deterministic and Randomized Query
  Complexity} 

\author{Sagnik Mukhopadhyay\thanks{S. Mukhopadhyay is supported
    by a TCS Fellowship.}
    \\ \bigskip Swagato Sanyal\thanks{S. Sanyal is supported by a DAE fellowship.}
    \\Tata Institute
  of Fundamental Research, Mumbai\\ \texttt{\{swagatos\ ,\ sagnik\_m\} @tcs.tifr.res.in}}





\usepackage{amsmath}
\usepackage{amssymb}
\usepackage{amsthm}
\usepackage{fullpage}
\usepackage{mathpazo}
\usepackage{hyperref}
\usepackage{placeins}

\usepackage{algorithm}
\usepackage{algpseudocode}

\usepackage{pifont}

\newtheorem{theorem}{Theorem}
\newtheorem{corollary}[theorem]{Corollary}
\newtheorem{lemma}[theorem]{Lemma}

\newtheorem{definition}[theorem]{Definition}
\newtheorem{claim}[theorem]{Claim}





\setlength{\parindent}{0pt}
\setlength{\parskip}{10pt}



\newcommand{\lref}[2][]{\hyperref[#2]{#1~\ref*{#2}}}

\begin{document}
\maketitle
\begin{abstract}
We show that there exists a Boolean function $F$ which observes the following separations among deterministic query complexity $(D(F))$, randomized zero error query complexity $(R_0(F))$
and randomized one-sided error query complexity $(R_1(F))$: $R_1(F) = \widetilde{O}(\sqrt{D(F)})$ and $R_0(F)=\widetilde{O}(D(F))^{3/4}$. This refutes the conjecture made by
Saks and Wigderson that for any Boolean function $f$, $R_0(f)=\Omega({D(f)})^{0.753..}$. This also shows widest separation between $R_1(f)$ and $D(f)$ for any Boolean function. The
function $F$ was defined by G{\"{o}}{\"{o}}s, Pitassi and Watson who studied it for showing a separation between deterministic decision tree complexity
and unambiguous non-deterministic decision tree complexity. Independently of us, Ambainis et al proved that different variants of the function $F$ certify optimal (quadratic) 
separation between $D(f)$ and $R_0(f)$, and polynomial separation between $R_0(f)$ and $R_1(f)$. Viewed as separation results, our results are subsumed by those of Ambainis et al.
However, while the functions considerd in the work of Ambainis et al are different variants of $F$, we work with the original function $F$ itself.
\end{abstract}

\section{Introduction}
\label{intro}
The model of decision trees is one of the simplest models of computation. In this model, an algorithm for computing a Boolean function is given query access to the input.
The algorithm queries different bits of the input, possibly in an adaptive fashion, and eventually outputs a bit. The objective is to minimize the number of queries made.
The amount of computation is generally not the quantity of interest in this model. For a Boolean function $f$, The deterministic query complexity $D(f)$ of $f$ is defined to be the
maximum (over inputs) number of queries
the best deterministic query algorithm for $f$ makes. The bounded-error randomized query complexity $R(f)$ of $f$ is defined to be the number of queries made on the worst input by the best
randomized query algorithm for $f$
that is correct with high\footnote{By \emph{high} we mean a constant strictly greater than $\frac{1}{2}$} probability on every input. $R_0(f)$, the zero error
randomized query complexity of $f$, is the expected number of queries made on the worst input
by the best randomized algorithm for $f$ that gives correct answer on each input with probability $1$. Finally $R_1(f)$, the one-sided randomized query complexity of $f$,  is the number of queries made on the worst input by the
best algorithm that is correct on every input with high probability, and in addition correct on every $1$-input with probability $1$. We give formal definitions of
these measures in the next section. 

The relations between these query complexity measures have been extensively studied in the literature. That randomization can save more than a constant factor of queries has been known
for a long time. In their $1986$ paper, Saks and Wigderson \cite{DBLP:conf/focs/SaksW86} gave examples of recursive NAND trees and recursive MAJORITY trees, for which
they credited Snir and Ravi Bopanna respectively. In both these functions, the deterministic and randomized zero-error query complexity are polynomially separated. In the same
paper, Saks and Wigderson studied binary uniform NAND trees, and showed that $R_0(F) = \Theta(D(F)^{0.753..})$ where $F$ is the binary uniform NAND tree function. They also conjectured that this is the widest
separation possible between these two measures of complexity for any Boolean function. For the same function, Santha \cite{DBLP:conf/coco/Santha91} showed that $R(F) =
(1-2\epsilon)R_0(F)$ where $\epsilon$ is the error probability. So for this function we have $R(F) = \Theta(D(F)^{0.753\dots})$. It is easy to see that $D(f) \geq
R(f), R_0(f), R_1(f)$. Blum and Impagliazzo \cite{DBLP:conf/focs/BlumI87}, Tardos \cite{DBLP:journals/combinatorica/Tardos89} and Hartmanis and Hemachandra 
\cite{DBLP:conf/coco/HartmanisH87} independently showed that $R_0(f) \geq \sqrt{D(f)}$. Nisan \cite{DBLP:journals/siamcomp/Nisan91} showed that
for any Boolean function $f$, $D(f) \leq 27R(f)^3$ and $D(f) \leq R_1(f)^2$. The biggest gap known so far between $D(f)$ and $R(f)$ for any $f$ is much less than cubic.

During this work, we came to know of the recent work by Ambainis et al \cite{DBLP:journals/corr/AmbainisBBL15}. In this work the authors prove various separation results
between different query complexity measures. Among several other results, the authors prove:
\begin{enumerate}
 \item There exists a function $f$ for which $R_0(f) = \widetilde{O}(\sqrt{D(f)})$. In view of the lower bound, this is the widest separation possible between these two measures.
 This refutes the conjecture by Saks and Wigderson.
 \item There exists a function $f$ for which $R_0(f) = \widetilde{\Omega}(R_1(f)^{3/2})$.
\end{enumerate}

\subsection{Query Models}
\label{SEC:PREL}

\paragraph{Deterministic query complexity.} A deterministic  query algorithm can be thought of as a rooted binary tree where each internal node is labeled with a variable and each
leaf is labeled with  0 or 1. The algorithm starts by querying the variable at the root of the tree and depending on the value it gets it chooses between its left child and
right child and thus goes down the tree recursively. If the value of a variable at any internal node is 0, the algorithm descends to its left child, otherwise it descends to its right
child.
Whenever the algorithm reaches a leaf, it outputs the value of the leaf and terminates. We say that the query algorithm correctly computes $f$ if for any input $x \in \{0,1\}^n$, the
algorithm outputs $f(x)$. The deterministic query complexity of a function $f$ is defined as follows.

\begin{equation}
D(f) = \min_{T} \mathsf{Depth}(T),
\end{equation}
where $T$ ranges over decision trees which correctly computes $f$.

\paragraph{Randomized query complexity.} A randomized query algorithm can be thought of as a distribution over deterministic query algorithms. A randomized
query algorithm can also be viewed as a query algorithm where each node has an additional power of tossing coins . After querying the variable associated
with any internal node of the tree, the algorithm decides which input bit to query depending on the responses to the queries so far (i.e. the current node in the tree) and the value of the
coin tosses while in that node. It is not hard to see that the two definitions are equivalent. We are interested in two different measures of complexity, one where we do not allow the
algorithm to make error and we measure the expected number of queried variables on an input. Let us denote the expected number of queried variables by algorithm $A$ for evaluating
$f$ on input $x$ by $Q(A,x)$. The zero error randomized query complexity of $f$, denoted by $R_0(f)$, is defined as follows:

\begin{equation}
R_0(f) = \min_A \max_x Q(A,x),
\end{equation}
where $A$ ranges over all randomized query algorithms which correctly computes $f$ on every input. It is to be noted that the expectation is taken over the random coin tosses. Another notion of complexity is
randomized bounded error query complexity, where we allow the query algorithm to err on inputs and we look at the maximum number of queries on any input. We say that a randomized
query algorithm $A$ computes $f$ with probability $\delta$ if for every input $x$, $\Pr_R[A(x) \neq f(x)] \leq \delta$. The bounded error randomized query complexity of $f$, denoted
by $R_\delta(f)$, is defined as follows.

\begin{equation}
R_\delta(f) = \min_A \mathsf{Depth}(A_T),
\end{equation}
where $A_T$ denotes the support of the distribution of binary trees associated with $A$ and we take the minimum over those $A$'s which computes $f$ with error probability $\delta$.
The depth of a collection of trees is interpreted as the maximum depth of any tree in that collection. Since the randomized
bounded error query complexity of $f$ for any two constant error values are within a constant multiplicative factor of each other, we drop the subscript $\delta$ whenever convenient,
and call it $R(f)$.\\
A third notion of query complexity is randomized one-sided query complexity. An input $x$ is said to be a $0$-input ($1$-input) of a function $f$ if $f(x)=0$ $(f(x)=1)$. We say that
a
randomized one-sided error query algorithm $A$ computes $f$ with probability $\delta$ if for every $1$-input, $x$, $\Pr_R[A(x) \neq f(x)] =0$ and for every $0$-input
$x$, $\Pr_R[A(x) \neq f(x)] \leq \delta$. The one-sided error randomized query complexity of $f$, denoted by $R_1^\delta$, is defined as follows.
\begin{equation}
R_1 ^\delta= \min_A \mathsf{Depth}(A_T),
\end{equation}
where $A_T$ denotes the support of the distribution of binary trees associated with $A$ and we take the minimum over those $A$'s which computes $f$ with error probability $\delta$. Since the one-sided error
randomized query complexity of $f$ for any two constant error values are within a constant multiplicative factor of each other, we drop the subscript $\delta$ whenever convenient,
and call it $R_1(f)$.

For our zero-error algorithm we will use the following simple fact: For any Boolean function $f$, $R_0(f)=O(\max \{R_1(f), R_1(\overline{f})\})$.
\subsection{Our results}
\label{us}
In this work we prove the following results.
\begin{theorem}
  \label{th1}
  There exists a Boolean function $F$ for which $R_0(F) =  \widetilde{O}(D(F)^{3/4})$.
\end{theorem}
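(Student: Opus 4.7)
I work with the same pointer function $F$ of G{\"{o}}{\"{o}}s, Pitassi and Watson that was introduced to separate $D$ from unambiguous certificate complexity. An input to $F$ is an $m \times m$ grid of cells, each cell holding a bit together with a (possibly null) pointer to another cell; a grid is a $1$-input iff it contains a unique rigid ``pointer gadget'' consisting of an all-ones target column and a marker row whose cells point back into that column. From the work of G{\"{o}}{\"{o}}s et al.\ we may assume $D(F) = \widetilde{\Theta}(m^2)$, so proving Theorem~\ref{th1} reduces to giving a zero-error algorithm of cost $\widetilde O(m^{3/2})$. I shall invoke the fact cited at the end of Section~\ref{SEC:PREL}, $R_0(F) = O(\max\{R_1(F),\, R_1(\overline F)\})$, and bound the two one-sided complexities separately.

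The bound $R_1(F) = \widetilde O(m)$ is the easier half. The unambiguous $1$-certificate has size $O(m)$, and is rigid in the sense that a single cell of the marker row exposes the target column and a single cell of the target column constrains the marker row. The algorithm is a randomized ``structured sweep'': it queries $\widetilde O(m)$ cells chosen along a random permutation of the grid's rows so that it hits at least one marker-row cell of every $1$-input no matter what the coin tosses are; reading that cell reveals the target column, after which $O(m)$ deterministic queries verify the full certificate. Every tree in the support of the distribution thus certifies $1$ on every $1$-input (so zero error on the $1$-side is automatic), while the randomization over the permutation drives the error on each $0$-input to a constant.

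The harder bound is $R_1(\overline F) = \widetilde O(m^{3/2})$. Here we must certify, on a $0$-input of $F$, that no valid pointer gadget is present; the natural certificate of this fact has size $\Theta(m^2)$. The plan is a two-stage sampling scheme. Stage~1 samples $\widetilde O(\sqrt m)$ uniformly random cells in each of the $m$ columns, for a total of $\widetilde O(m^{3/2})$ queries; by a union bound, every column that is \emph{not} entirely labelled $1$ reveals a sampled $0$ with high probability, so at most $\widetilde O(\sqrt m)$ columns survive as candidate target columns. Stage~2 exhaustively inspects each surviving candidate together with its claimed marker row, at an additional cost of $\widetilde O(m)$ per candidate, hence $\widetilde O(m^{3/2})$ overall. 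The main obstacle is the case analysis for stage~2: the $0$-inputs of the GPW construction are explicitly designed to contain many ``partial'' gadgets that locally resemble $1$-certificates (this is what forces $D(F) = \widetilde\Omega(m^2)$), and one has to verify that every such partial gadget is refuted by the stage~2 inspection with constant probability, so that the algorithm correctly outputs $1$ for $\overline F$ with high probability on every $0$-input of $F$.

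Combining the two one-sided bounds via the displayed inequality from Section~\ref{SEC:PREL} gives
$R_0(F) = O(\max\{R_1(F),\, R_1(\overline F)\}) = \widetilde O(m^{3/2}) = \widetilde O(D(F)^{3/4})$,
which is precisely the statement of Theorem~\ref{th1}.
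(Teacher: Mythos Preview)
Your high-level decomposition---bound $R_1(F)$ and $R_1(\overline F)$ separately and combine via $R_0(F)=O(\max\{R_1(F),R_1(\overline F)\})$---is exactly the paper's. But both one-sided bounds you sketch have real gaps.

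First, you have misidentified the function. The $1$-certificate of the GPW function used here is \emph{not} an all-$1$'s column together with a ``marker row whose cells point back into that column.'' It is an all-$1$'s column together with a pointer \emph{chain} of length $m$: starting from the first non-null pointer in that column, the chain visits one cell in each of the other $m-1$ columns, each visited cell having bit-entry $0$. Your $R_1(F)$ algorithm (hit the marker row by sampling across rows, read one pointer to learn the target column) relies entirely on the marker-row picture and does not apply. The paper's algorithm for this direction is the \textsc{MilestoneTrace} procedure: sample a random cell and follow its outgoing chain, but under an adaptive budget that aborts the traversal whenever it is not discarding undiscovered columns at rate roughly $|\mathcal C|/m$ per step. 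The nontrivial content (Lemma~\ref{probability} and Lemma~\ref{hit}) is that a \emph{random} entry point on the principal chain passes this budget test with high probability, so the traversal is not aborted prematurely.

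Second, and more damaging, your Stage~1 analysis for $R_1(\overline F)$ is simply wrong. Sampling $\widetilde O(\sqrt m)$ cells in a column does \emph{not} reveal a $0$ with high probability in ``every column that is not entirely labelled $1$'': a column with a single $0$-cell is detected only with probability $\widetilde O(1/\sqrt m)$. The correct conclusion of sparsification (the paper's Claim~\ref{CLM:SPAR}) is that every surviving column has at most $\sqrt m$ zero cells---\emph{not} that only $\widetilde O(\sqrt m)$ columns survive. So after Stage~1 you may still have $\Theta(m)$ candidate columns, and exhaustively inspecting each costs $\Theta(m^2)$. The paper gets around this with a span/dichotomy argument (Claim~\ref{CLM:DICHO}): either a random surviving column $C$ has $|\mathsf{Span}_C|\ge |\mathcal C|/100$ with probability $\ge 1/100$, in which case tracing its at most $\sqrt m$ outgoing chains (cost $O(m^{3/2})$, thanks to sparsification) eliminates a constant fraction of $\mathcal C$; or two random columns are, with high probability, mutually outside each other's span, which is itself a $0$-certificate for $F$. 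Iterating this dichotomy $O(\log m)$ times is what drives $|\mathcal C|$ down to $1$. This mechanism is the heart of the $\widetilde O(m^{3/2})$ bound and is entirely absent from your plan.
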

Theorem \lref{th1} refutes the conjecture made by Saks and Wigderson
\cite{DBLP:conf/focs/SaksW86}, though this result does not match the
lower bound of $R_0(f)$ in terms of $D(f)$. As mentioned in the Introduction, Ambainis et al \cite{DBLP:journals/corr/AmbainisBBL15} exhibit a function that certifies quadratic
separation between $R_0(f)$ and $D(f)$, which is the widest possible in view of matching lower bound.
\begin{theorem}
  \label{th2}
  There exists a Boolean function $F$ for which $R_1(F) =  \widetilde{O}(\sqrt{D(F)})$.
\end{theorem}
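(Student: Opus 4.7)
The function $F$ is the pointer function of G\"o\"os, Pitassi, and Watson on an $m \times m$ grid with cells in $\{*\}\cup([m]^2\times\{0,1\})$, for which $D(F) = \widetilde{\Theta}(m^2)$ and every 1-input admits a short, \emph{unique} certificate: a ``special'' column containing a single non-NULL cell whose pointer initiates a chain of length $m$ visiting every other column exactly once and satisfying prescribed color constraints. My plan is to construct a distribution over depth-$\widetilde{O}(m)$ decision trees such that every tree in the support outputs $1$ on every 1-input and, on every 0-input, at least a $2/3$-fraction of the trees output $0$. Since $\widetilde{O}(m) = \widetilde{O}(\sqrt{D(F)})$, this gives the theorem.

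Each tree proceeds in three phases. \emph{Sampling:} for every column $j \in [m]$, query $O(\log m)$ uniformly random cells of column $j$, for a total of $\widetilde{O}(m)$ queries. \emph{Candidate identification:} call column $j$ a \emph{candidate} if at most one of its sampled cells is non-NULL. The special column of any 1-input contains only one non-NULL cell in total, so it is a candidate regardless of the random coins; the remaining columns carry $\Omega(m)$ non-NULL cells each by the GPW construction, and a standard Chernoff estimate shows that with probability $1-o(1)$ at most $C\log m$ columns will be flagged. \emph{Verification:} if more than $C\log m$ candidates are flagged, output $1$ immediately; otherwise, for each candidate column, query all its $m$ cells, locate the unique non-NULL cell if there is exactly one, trace its pointer chain through $m$ further cells, and test the column-coverage and color conditions defining a valid certificate. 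Output $1$ if any candidate produces a valid chain, and $0$ otherwise.

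Zero-error on 1-inputs follows combinatorially and does not depend on the coins: if the threshold is exceeded, the tree outputs $1$, which matches $F(x) = 1$; otherwise, the special column lies among the candidates and its chain---the unique valid certificate---is discovered during verification, again forcing output $1$. Hence every tree in the support is correct on every 1-input, which is the zero-error requirement of $R_1$. The worst-case depth of any tree is $\widetilde{O}(m)$ for sampling plus at most $C\log m$ verifications of $O(m)$ queries each, for a total of $\widetilde{O}(m)$.

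The main obstacle is the 0-input analysis, where I must argue that with probability at least $2/3$ both (a) no more than $C\log m$ columns are flagged and (b) no flagged column extends to a valid certificate. Part (a) is a concentration argument leveraging the GPW structural invariant that any column of a 0-input that could conceivably play the role of the special column forces $\Omega(m)$ of the cells in \emph{other} columns to be non-NULL, so columns sparse enough to be flagged are rare. Part (b) uses the unambiguity of the GPW certificate: a flagged column in a 0-input cannot extend to a chain simultaneously satisfying the column-coverage and color constraints, except on a family of atypical inputs whose mass one has to bound by a direct combinatorial estimate. Combining (a) and (b) while carefully tracking the polylogarithmic factors is where the real work lies.
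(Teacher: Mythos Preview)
Your proposal has a genuine gap that cannot be patched within the framework you describe. The crux is the ``bail out and output $1$'' step when more than $C\log m$ candidates are flagged. You need this escape hatch to keep the worst-case depth at $\widetilde{O}(m)$, but it forces an error on any $0$-input that triggers it. You claim this is rare because ``the GPW structural invariant'' guarantees that columns of a $0$-input are densely populated with non-NULL cells. No such invariant exists: $0$-inputs are simply all inputs that fail the $1$-condition, and they carry no structural constraints whatsoever. Concretely, take the input in which every cell is NULL (or, in the paper's formulation, every bit-entry is $0$ and every pointer is $\bot$). This is a $0$-input, yet every column is flagged as a candidate with probability $1$, so your algorithm outputs $1$ deterministically. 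Your part~(b) is fine---verification of a flagged column on a $0$-input can never succeed---but part~(a) is unsalvageable, and your appeal to ``atypical inputs whose mass one has to bound'' misreads the model: there is no distribution on inputs, and the algorithm must be correct with probability $2/3$ on \emph{every} $0$-input.

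Separately, your description of the special column (``a single non-NULL cell'') does not match the paper's function, where the special column has \emph{all} bit-entries equal to $1$ and only its pointer-entries are constrained; so even the $1$-input analysis would need reworking. The paper's algorithm avoids both problems by never defaulting to $1$: it repeatedly samples a random cell, follows the pointer chain from it, and discards columns where a $0$ bit-entry is seen. The only way it outputs $1$ is by discovering and verifying a full $1$-certificate. The delicate point---and the real content of the proof---is bounding the cost of chain-following: a naive trace could spend $\Omega(\sqrt n)$ queries on a long chain that revisits already-discarded columns. The paper's \textsc{MilestoneTrace} aborts a trace whenever the ratio of steps taken to new columns discarded exceeds $100\sqrt n/|\mathcal C|$, and a combinatorial lemma (Lemma~\ref{probability}) shows that a random cell on the principal chain passes this test with probability at least $97/100$. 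A telescoping harmonic-sum argument then bounds the total query cost by $\widetilde O(\sqrt n)$.
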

This separation matches the lower bound, upto logarithmic factors, on $R_1(f)$ in terms of $D(f)$ for any function $f$. However, since $R_0(f) \geq R_1(f)$, the function used by
Ambainis et al \cite{DBLP:journals/corr/AmbainisBBL15} also certifies the same separation. Thus, viewed as separation results, our results are subsumed by those of Ambainis et al
\cite{DBLP:journals/corr/AmbainisBBL15}. \\
The functions $F$ in Theorems \lref{th1} and \lref{th2}
are the same, and was first defined by G{\"{o}}{\"{o}}s et al \cite{DBLP:journals/eccc/GoosP015a} for showing a gap between deterministic decision tree complexity
and unambiguous non-deterministic decision tree complexity. While the functions used by Ambainis et al are different variants of this function, we work with the original
function itself. \\ \\
We define the function $F$ now. The domain of $F$ is $\mathcal{D}=\{0,1\}^{n(1+\lceil \log n \rceil)}$. An input $M \in \mathcal{D}$ to $F$ is viewed as a matrix of
dimension $\sqrt{n} \times \sqrt{n}$. Each cell $M_{i,j}$ of $M$ consists of two parts:
\begin{enumerate}
 \item A \emph{bit-entry} $b_{i,j} \in \{0,1\}$.
 \item A \emph{pointer-entry} $p_{i,j} \in \{0,1\}^{\lceil  \log n \rceil}$. $p_{i,j}$ is either a valid pointer to some other cell of $M$, or is interpreted as
 $\bot$ (null). If $p_{i,j}$ is not a valid pointer to some other cell, we write \textquotedblleft $p_{i,j}=\bot$\textquotedblright.
\end{enumerate}

Now, we define what we call a \emph{valid pointer chain}. Assume that $t=\sqrt{n}$. For an input $M$ to $F$, a sequence $((i_1,j_1),\ldots,(i_t,j_t))$ of indices in $[\sqrt{n}] \times [\sqrt{n}]$ is called a \emph{valid} pointer chain if:
  \begin{enumerate}
  \item $b_{i_1,j_1}=1$;
  \item $b_{i_2,j_2}=\ldots=b_{i_t,j_t}=0$;
  \item $\forall k < i_1, p_{k,j_1}=\bot$;
  \item for $\ell=1,\ldots, t-1, p_{i_\ell,j_\ell}=(i_{\ell+1},j_{\ell+1})$ and $p_{i_t,j_t}$ is $\bot$;
  \end{enumerate}

$F$ evaluates to $1$ on $M$ iff the following is true:

\begin{enumerate}
\item $M$ contains a unique all 1's column $j_1$, i.e., there exists $j_1 \in [\sqrt{n}]$ such that $\forall i \in [\sqrt{n}]$, $b_{i,j_1}=1$.

\item There exists a valid pointer chain $((i_1,j_1),\ldots,(i_t,j_t))$. This means 
that the column $j_1$ has a cell with non-null pointer entry. $(i_1,j_1)$ is the cell on column $j_1$ with minimum
row index whose pointer-entry is non-null. Starting from $p_{i_1,j_1}$, if we follow the
  pointer, the following conditions
  are satisfied: In each step except the last, the cell reached
  by following the pointer-entry of the cell in the previous step, contains a $0$ as
  bit-entry and a non-null pointer as pointer-entry. In the last
  step, the cell contains a zero as bit-entry and a null
  pointer ($\bot$) as pointer-entry. Also, this pointer chain covers all columns of $M$.
\end{enumerate}

By a simple adversarial strategy, G{\"{o}}{\"{o}}s et
al. \cite{DBLP:journals/eccc/GoosP015a} showed that $D(F) =
\widetilde{\Omega}(n)$. Our contribution is to show the following results.

\begin{lemma}
  \label{th3}
For the function $F$ defined above, $R_0(F) = \widetilde{O}(n^{3/4})$.
\end{lemma}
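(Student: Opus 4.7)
The plan is to use the simple identity $R_0(f) = O(\max\{R_1(f), R_1(\overline f)\})$ noted at the end of Section~\ref{SEC:PREL}. Together with Theorem~\ref{th2}, which supplies $R_1(F) = \widetilde O(\sqrt{n})$, it suffices to design a one-sided error algorithm for $\overline F$ that uses $\widetilde O(n^{3/4})$ queries: on inputs with $F(M)=0$ it must return $1$ with probability $1$, and on inputs with $F(M)=1$ it must return $0$ with constant probability. Since $F(M)=1$ is certified by the concrete witness $(j_1,(i_1,\ldots,i_t))$ described in the definition of $F$, and this witness is checkable from scratch by a deterministic procedure using $O(\sqrt{n}\log n)$ queries, I will output $0$ only after a strict deterministic verifier accepts; correctness on no-inputs of $F$ is then automatic, and the entire task is to locate the witness on yes-inputs within the claimed query budget.

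The algorithm I would use runs in three phases. Phase~1 samples $s = \widetilde\Theta(n^{3/4})$ uniformly random cells and queries both bit- and pointer-entries; a column is declared \emph{caught} as soon as some sampled cell in it has $b=0$. Phase~2 processes only \emph{useful seeds} --- sampled cells with $b=0$, a valid pointer, and lying in an uncaught column --- chasing each seed's pointer chain with global deduplication and halting when it meets a cell with $b=1$, a cell with $p=\bot$, or a previously-visited cell; every new visited $b=0$ cell catches its column. Phase~3 enumerates the final set $U$ of still-uncaught columns, queries each column in $U$ in full, identifies the unique all-ones column as the candidate $j_1$, and runs the deterministic verifier on the entire witness, outputting $0$ iff the verifier accepts.

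The proof then reduces to two probabilistic claims on yes-inputs, each holding with probability at least $1 - 1/\poly(n)$: (i) the total Phase~2 cost is $\widetilde O(n^{3/4})$, and (ii) $|U| = \widetilde O(n^{1/4})$. For (i), Phase~2 only visits $b=0$ cells of uncaught columns; a column with more than $\widetilde\Theta(n^{1/4})$ zeros is caught by Phase~1 with high probability because the sampling density is $s/n = \widetilde\Theta(n^{-1/4})$, so the uncaught columns collectively contain at most $\sqrt{n} \cdot \widetilde O(n^{1/4}) = \widetilde O(n^{3/4})$ zeros, which bounds Phase~2 after deduplication. For (ii), a non-$j_1$ column $j_\ell$ on the real chain survives in $U$ only if Phase~1 missed every zero of $j_\ell$ and no useful chain seed sampled by Phase~1 lies at a chain position $\le \ell$; writing $E[|U|]$ as a sum over $\ell$ and maximising over the adversary's choice of $(z_j)_{j \ne j_1}$ gives $E[|U|] = \widetilde O(n^{1/4})$, which concentration promotes to a high-probability bound. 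Phase~3 then costs $\widetilde O(n^{1/4}) \cdot \sqrt{n} = \widetilde O(n^{3/4})$, so the overall cost is $\widetilde O(n^{3/4})$, yielding $R_1(\overline F) = \widetilde O(n^{3/4})$ and hence $R_0(F) = \widetilde O(n^{3/4})$. The hard part is exactly the coupled trade-off underlying (i) and (ii): the adversary would like to keep columns uncaught by Phase~1 (which forces $z_j$ small, and therefore forbids long fake-chain structures) while simultaneously inflating Phase~2 with fake-chain chases (which demands many $b=0$ non-chain cells, and hence large $z_j$, in precisely the columns that must stay uncaught); turning this informal tension into a uniform bound on $E[|U|]$ over all adversarial instances is the delicate technical step.
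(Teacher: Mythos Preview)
Your high-level reduction via $R_0(F)=O(\max\{R_1(F),R_1(\overline F)\})$ matches the paper. However, the algorithm you design is aimed at the opposite one-sided task from the paper's: the paper's Algorithm for Lemma~\ref{th3} searches for $0$-certificates on $0$-inputs (so it never errs on $1$-inputs), whereas you search for the $1$-certificate on $1$-inputs. More importantly, your algorithm has a concrete gap.

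Your claims (i) and (ii) pull in opposite directions and cannot both hold for the algorithm as written. For (i) you assert that ``Phase~2 only visits $b=0$ cells of uncaught columns''; given your stated halting rules ($b=1$, $p=\bot$, or previously visited cell), this is false unless you additionally halt upon entering a Phase-1-caught column. Without that extra halt, take the $1$-input in which every cell outside $j_1$ has $b=0$ and all non-principal cells are strung into one long pointer chain feeding into the principal chain: a single seed then forces Phase~2 to traverse $\Omega(n)$ cells. With the extra halt, claim (ii) fails. Consider the $1$-input in which each non-$j_1$ column carries exactly two zeros, the principal-chain cell and one decoy with $p=\bot$. Each column is Phase-1-caught with probability $\widetilde\Theta(n^{-1/4})$, and conditioned on being caught, the sampled zero is the principal-chain cell only with probability about $1/2$. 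Hence roughly half of the Phase-1-caught columns block the principal-chain traversal without supplying a seed to resume it; the open runs between consecutive blockers (total length $\approx\sqrt n$) are never reached, and $\mathbb E[|U|]=\Theta(\sqrt n)$, not $\widetilde O(n^{1/4})$. Your ``only if'' condition for $j_\ell\in U$ is therefore not a valid necessary condition once the halt-on-caught rule is in force. (Incidentally, your definition of a \emph{useful seed} --- a sampled $b=0$ cell lying in an \emph{uncaught} column --- is self-contradictory, since sampling a $b=0$ cell catches its column.)

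The paper sidesteps this tension entirely by working on the $0$-side. After the same sparsification (so every surviving column has at most $n^{1/4}$ zeros), it uses the notion of $\mathsf{Span}_C$ and a dichotomy: either a random surviving column has span at least $|\mathcal C|/100$ with probability $\geq 1/100$, in which case reading one random column fully and tracing its at most $n^{1/4}$ zero-pointers (cost $O(n^{3/4})$) shrinks $\mathcal C$ by a constant factor; or two random survivors are, with constant probability, mutually outside each other's span, which is itself a $0$-certificate. This argument never needs to track the principal chain, which is precisely what makes the $\widetilde O(n^{3/4})$ budget achievable.
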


\begin{lemma}
  \label{th4}
For the function $F$ defined above, $R_1(F) = \widetilde{O}(\sqrt{n})$.
\end{lemma}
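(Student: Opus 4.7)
I will construct a randomized one-sided error decision-tree algorithm for $F$ whose every tree in the support has depth $\tilde{O}(\sqrt{n})$, outputs $1$ with probability $1$ on every $1$-input, and outputs $0$ with probability at least $2/3$ on every $0$-input.

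The starting observation is that every $1$-input $M$ of $F$ admits a $1$-certificate of size $\tilde{O}(\sqrt{n})$ bits, comprising (i)~the $\sqrt{n}$ bit-entries of the all-$1$ column $j_1$; (ii)~the bit- and pointer-entries of the $\sqrt{n}$ chain cells $(i_1,j_1),\ldots,(i_t,j_t)$; and (iii)~the null pointer-entries of the (at most $\sqrt{n}$) cells above $(i_1,j_1)$ on column $j_1$. Crucially, the chain plants a $0$-bit in every column $j\neq j_1$ via the cells $(i_\ell,j_\ell)$ with $\ell\ge 2$, so the chain itself certifies that $j_1$ is the \emph{unique} all-$1$ column; the certificate need not check uniqueness separately.

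The algorithm uses its random bits to pick a row $i^*\in[\sqrt{n}]$ uniformly. The associated deterministic tree $T_{i^*}$ proceeds by (1)~querying every bit- and pointer-entry in row $i^*$, at cost $\tilde{O}(\sqrt{n})$; (2)~for each cell $(i^*,j)$ of row $i^*$ consistent with lying on the chain (bit $1$ with non-null pointer, or bit $0$ with non-null pointer), performing further adaptive probes---forward tracing of the pointer chain from $(i^*,j)$ for up to $\sqrt{n}$ steps, upward scanning of column $j$ to find the topmost non-null pointer entry, and a bit-scan of column $j$---budgeted so that the total number of such probes stays within $\tilde{O}(\sqrt{n})$; and (3)~outputting $1$ iff some candidate assembles a valid $1$-certificate (a chain of length $\sqrt{n}$ covering all columns, rooted at the topmost non-null pointer cell of an all-$1$ column).

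The main obstacle is proving probability-$1$ correctness on $1$-inputs: every tree $T_{i^*}$ must output $1$ on every $1$-input. If $i^*=i_1$, the starting cell $(i_1,j_1)$ appears directly in the row-$i^*$ queries and forward tracing yields the whole chain. If $i^*\neq i_1$ but row $i^*$ intersects the chain at some position $\ell\ge 2$, forward tracing from the bit-$0$ chain cell $(i^*,j_\ell)$ yields a suffix of the chain, which determines $j_1$ as the unique missing column; an upward scan of column $j_1$ then locates $(i_1,j_1)$. The hardest case is when row $i^*$ is disjoint from the chain: here $(i^*,j_1)$ nonetheless has bit $1$, and the algorithm must exploit the constraint that every non-$j_1$ column contains a $0$-bit at its chain cell to prune the bit-$1$ candidates of row $i^*$ efficiently. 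A delicate case analysis, tracking how adversarial placements of chain cells and decoy pointers interact with the row-$i^*$ probes, is required to show the total query count remains $\tilde{O}(\sqrt{n})$ in every case.

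Finally, error on $0$-inputs is relatively straightforward: a valid $1$-certificate verified by the algorithm implies $F=1$, so on a $0$-input the algorithm outputs~$1$ only if its randomised probing mistakenly validates an incomplete certificate---a probability that a standard analysis bounds by~$1/3$.
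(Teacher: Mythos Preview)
Your approach has a genuine gap that no case analysis can close. You require every deterministic tree $T_{i^*}$ to output $1$ on every $1$-input, but consider the $1$-input in which the principle chain lies entirely in row $1$ (take $j_1=1$, $i_\ell=1$ for all $\ell$, $p_{1,j}=(1,j{+}1)$ for $j<\sqrt{n}$, $p_{1,\sqrt{n}}=\bot$), and for every $k\ge 2$ and every $j$ set $b_{k,j}=1$ and $p_{k,j}=\bot$. This is a valid $1$-input. For any $i^*\ge 2$ the entire queried row shows bit $1$ and a null pointer at every cell; no cell meets your ``consistent with lying on the chain'' criterion, there is nothing to trace, and the $\sqrt{n}$ columns are indistinguishable from the row-$i^*$ data alone. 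Locating $j_1$ now forces $T_{i^*}$ to probe $\Omega(\sqrt{n})$ distinct columns at $\Omega(\sqrt{n})$ cells each, i.e.\ $\Omega(n)$ queries; within a $\widetilde{O}(\sqrt{n})$ budget $T_{i^*}$ must output $0$. Thus your algorithm errs on this $1$-input with probability $1-1/\sqrt{n}$, so it is not even a bounded-error algorithm, let alone one-sided on the side you claim. (Your last paragraph is also inverted: an algorithm that outputs $1$ only upon assembling a verified $1$-certificate is \emph{always} correct on $0$-inputs; whatever error there is lives on the $1$-side.)

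The paper's algorithm accordingly places the error on $1$-inputs and uses a completely different mechanism. It maintains the set $\mathcal{C}$ of not-yet-eliminated columns, repeatedly samples a uniformly random cell $(i,j)$ with $j\in\mathcal{C}$, and follows the pointer chain from $(i,j)$ \emph{only while it is discovering new columns of $\mathcal{C}$ at rate at least roughly $|\mathcal{C}|/\sqrt{n}$} (the \textsc{MilestoneTrace} budget). On a $1$-input a random such cell lies on the principle chain with probability $1/\sqrt{n}$; when it does, an interval-covering argument shows the trace discards a constant fraction of $\mathcal{C}$ with high probability, and $O(\sqrt{n}\log n)$ repetitions shrink $\mathcal{C}$ to $O(1)$ columns, which are then read in full. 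The total query cost is controlled by a harmonic-sum bound $\sum_i \Delta s_i/s_i = O(\log n)$. This adaptive rate-based budgeting is precisely what stops long decoy chains from wasting queries---the failure mode your row-based scheme cannot avoid.
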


Clearly, Lemmas \lref{th3} and \lref{th4} imply Theorems \lref{th1}
and \lref{th2} respectively.

\section{Intuition of the Randomized One-sided Error Query Algorithm for $F$}
We show that the randomized one-sided error query complexity of $F$ is $\widetilde{O}(\sqrt{n})$. In this section we provide intuition for our one-sided error algorithm for $F$. Our algorithm errs on one side:
on $0$-inputs it always outputs $0$ and on $1$-inputs it outputs $1$ with high probability. \\
The algorithm attempts to find a $1$-certificate. If it fails to find a $1$-certificate, it outputs $0$. We show that on every $1$ input, with high probability, the algorithm succeeds in
finding a $1$-certificate. The $1$-certificate our algorithm looks for consists of:
\begin{enumerate}
 \item A column $j$, all of whose bit-entries are $1$'s.
 \item All null pointers of column $j$ till its first non-null pointer-entry.
 \item The pointer chain of length $\sqrt{n}$ that starts from the first non-null pointer entry, and in the next $\sqrt{n}-1$ hops, visits all the other columns.
 The bit entries of all the other cells of the pointer chain than the one in this column are $0$.
\end{enumerate}
To find a $1$-certificate, the algorithm tries to find columns with $0$-cells on them, and adds those columns to a set of discarded columns that it maintains. To this end, a first natural attempt is to repeatedly sample a cell randomly from $M$, and if its bit-entry is $0$, try to follow the pointer originating from that cell. Following
the chain, each time we visit a cell with bit-entry $0$, we can discard the column on which the cell lies. We can expect that, with high probability, after sampling $O(\sqrt{n})$ cells, we land up on some cell in the middle portion of the correct pointer chain that is contained in 
the $1$-certificate (we call this the \emph{principle chain}). Then if we follow that pointer we spend $O(\sqrt{n})$ queries, and eliminate a constant fraction of the existing columns. \\
The problem with this approach is possible existence of other long pointer chains, than the principle chain. It may be the case that we land up on one such chain, of $\Omega(\sqrt{n})$
length, which passes entirely through the columns that we have already discarded. Thus we end up spending $\Omega(\sqrt{n})$ queries, but can discard only one column (the one we began from). \\
To bypass this problem, we start by observing that the principle chain passes through every column, and hence in particular through every undiscarded column. Let $N$ be the number of undiscarded column
at some stage of the algorithm. Note that the length of the principle chain is $\sqrt{n}$. Therefore if we start to follow it from a randomly chosen cell on it, we are expected to see
an undiscarded column in roughly another $\sqrt{n}/N$ hops. In view of this, we modify our algorithm as follows: while following a pointer chain, we check if on an average we are seeing one
undiscarded column in every $\sqrt{n}/N$ hops. If this check fails, we abandon following the pointer, sample another random cell from $M$, and continue. Our procedure \textsc{MilestoneTrace}
does this pointer-traversal. We can prove that conditioned on the event that we land up on the principle chain, the above traversal algorithm enables us to eliminate a constant
fraction of the existing undiscarded columns with high probability. We also show that spending about $\sqrt{n}/N$ queries for each column we eliminate is enough for us to get the desired query
complexity bound. \\
After getting hold of the unique all $1$'s column, the final step is to check if all its bit-entries are indeed $1$'s, and if that can be completed into a full $1$-certificate.
That can clearly be done in $\widetilde{O}(\sqrt{n})$ queries. The \textsc{VerifyColumn} procedure does this.

\section{Bounding $R_1(F)$}

\begin{algorithm}
\begin{algorithmic}[1]

\Procedure{MilestoneTrace($M,\mathcal{C},i,j$)}{}
\State Read $b_{i,j}$;
\If{$b_{i,j}=1$}
    \Return; \label{loss}
\EndIf
\State \emph{step}:=$0$;
\State \emph{discard}:=$1$;
\State \emph{current}:=$(i,j)$;
\State \emph{seen}:=$\{j\}$;
\While{\emph{step} $\leq 100 \sqrt{n} \cdot \frac{\mbox{\emph{discard}}}{|\mathcal{C}|}$}
	\State read the pointer-entry of \emph{current};
	\State \emph{step} $\gets$ \emph{step}$+1$;
	\State \emph{current} $\gets$ pointer-entry of \emph{current};
	\If{\emph{current} is $\bot$} \label{null} goto step\lref{updateC}; \EndIf
	\State read bit-entry of \emph{current};
	\If{\emph{current} is on a column $k$ in $\mathcal{C} \setminus \emph{seen}$ and bit-entry of \emph{current} is $0$}
	      \State \emph{seen} $\gets$ \emph{seen} $\cup \{k\}$;
	      \State \emph{discard} $\gets$ \emph{discard}$+1$;
	\EndIf
\EndWhile
\State $\mathcal{C}\gets\mathcal{C} \setminus \emph{seen}$; \label{updateC}
\EndProcedure

\end{algorithmic}
\end{algorithm}

\begin{algorithm}
\begin{algorithmic}[1]

\Procedure{VerifyColumn($M,k$)}{}
\label{ver}
\State Check if all the bit-entries of cells in the $k$-th column of $M$ are $1$; If not, output $0$;
\State \If{All the pointer-entries of cells in the the $k$-th column of $M$ are $\bot$} \State Output $0$;
\EndIf
\If{The pointer chain starting from the first non-null pointer in column $k$ is valid}
    \State Output $1$;
 \Else
	\State Output $0$;  
\EndIf
\EndProcedure

\end{algorithmic}
\end{algorithm}

\begin{algorithm}
\caption{}
\label{randDT0}
\begin{algorithmic}[1]

\State $\mathcal{C} :=$ set of columns in $M$.

\For{$t=1$ to $O(\sqrt{n} \log n)$}
	\If{$|\mathcal{C}| < 100$}
	\State goto step\lref{out1};
	\EndIf \label{sample1} 
	\State Sample a column $j$ from $\mathcal{C}$ uniformly at random; \label{sample1} 
	\State Sample $i \in [\sqrt{n}]$ uniformly at random; \label{sample2} 
	\State \textsc{MilestoneTrace}$(M,\mathcal{C},i,j)$;
	
\EndFor
\If{$|\mathcal{C}| > 100$ or $|\mathcal{C}|=0$} \label {out1} 
	\State Output $0$;
	\Else
	      \State Read all columns in $\mathcal{C}$;
	      \If{There is a column $k$ with all bit-entries equal to $1$}
	      		\State \textsc{VerifyColumn$(M,k)$};
	      \Else \State Output 0;
	      \EndIf
\EndIf	      

\end{algorithmic}
\end{algorithm}

In this section we give the formal description and analysis of our one-sided error query algorithm for $F$: Algorithm\lref{randDT0}. Algorithm\lref{randDT0} uses two procedures: \textsc{VerifyColumn} and \textsc{MilestoneTrace}. As outlined in the previous section,
\textsc{VerifyColumn}, given a column, checks if all its bit-entries are $1$ and whether it can be completed into a $1$-certificate. \textsc{MilestoneTrace} procedure implements
the pointer traversal algorithm that we described in the preceding paragraph. We next describe the \textsc{MilestoneTrace} procedure in a little more detail. We recall from the last section that the algorithm discards columns in course of its execution. We denote the set of undiscarded columns by $\mathcal{C}$.

\textbf{\underline{\textsc{MilestoneTrace} procedure}}

The functions of the variables used are as follows:

\begin{enumerate}
 \item \emph{step}: Contains the number of pointer-entries queried so far. A bit query is always accompanied by a pointer query, unless the bit is $1$ in which case the traversal stops.
 So upto logarithmic factor, the value in \emph{step} gives us the number of bits queried.
 \item \emph{seen}: Set of columns that were undiscarded before the current run of \textsc{MilestoneTrace}, and that have so far been seen and marked for discarding.
 \item \emph{discard}: size of \emph{seen}
 \item \emph{current}: Contains the indices of the cell currently being considered.
\end{enumerate}
The condition in the \emph{while} loop checks if the number of queries spent is not too much larger than $\frac{\sqrt{n}}{|\mathcal{C}|}$ at any point in time. The \emph{if}
condition in line \lref{null} checks if the current pointer-entry is null. If it is null, $\mathcal{C}$ is updated, and control returns to Algorithm\lref{randDT0}.
The condition in line\lref{null} checks if the pointer chain has reached its end. \\ \\
To analyse Algorithm\lref{randDT0}, we need to prove two statements about \textsc{MilestoneTrace}, which we now informally state. Assume that the algorithm is run on a 
$1$-input.
\begin{enumerate}
 \item Conditioned on the event that a cell $(i,j)$ randomly chosen from the columns in $\mathcal{C}$ is on the principle chain, a call to \textsc{MilestoneTrace}$(M, \mathcal{C}, i, j)$ serves to eliminate a constant fraction of surviving
 columns with high probability.
 \item It is enough to ensure that the average number of queries spent for each eliminated column is not too much larger than $\frac{\sqrt{n}}{|\mathcal{C}|}$. Note that
 $|\mathcal{C}|$ is the number of undiscarded columns during the start of the \textsc{MilestoneTrace} procedure. 
\end{enumerate}
In the following subsection, we prove that Algorithm\lref{randDT0} makes $\widetilde{O}(\sqrt{n})$ queries on every input. In the next subsection we prove that Algorithm\lref{randDT0} succeeds with probability $1$ on $0$-inputs and with probability at
least $2/3$ on $1$-inputs. Lemma\lref{th4} follows from Lemma\lref{query}, Corollary\lref{one-sided} and Lemma\ref{success}.

\subsection{Query complexity of Algorithm \lref{randDT0}}
In this subsection we analyse the query complexity of Algorithm\lref{randDT0}. We bound the total number of $b_{i,j}$'s and $p_{i,j}$'s read by the algorithm. Upto logarithmic factors, that is the total number of bits queried. For the rest of this subsection, one query will mean one query to a bit-entry or a pointer-entry of some cell. \\ 
We first analyse the \textsc{MilestoneTrace} procedure. Recall that $\mathcal{C}$ denotes the set of undiscarded columns.
\begin{lemma}
\label{mt}
Let $i,j$ be such that $b_{i,j}=0$. Let $Q$ and $D$ respectively be the number of queries made and number of columns discarded by a call to \textsc{MilestoneTrace}$(M,\mathcal{C},i,j)$. Then,
\[Q \leq D \cdot \frac{200 \sqrt{n}}{|\mathcal{C}|} + 3\]
\end{lemma}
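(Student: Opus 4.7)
The plan is to do a careful accounting of queries made by \textsc{MilestoneTrace} against the number of iterations of its \textbf{while} loop, and then bound the number of iterations using the loop guard. No deeper structural property of $F$ or of the principle chain is needed; this is essentially a bookkeeping lemma.

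First I would split the queries into (a) the single query to $b_{i,j}$ before the loop and (b) the queries inside the loop. Within one iteration of the \textbf{while} loop, the code reads exactly one pointer-entry; it then either exits via the null-pointer branch (in which case no further query is made), or reads exactly one bit-entry of the new \emph{current}. Consequently, if $k$ denotes the number of iterations actually executed, the total number of queries satisfies $Q \leq 1 + 2k$ (with equality when no iteration ends via a null pointer).

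Next I would bound $k$ using the loop condition. The \textbf{while} guard is tested at the top of each iteration, so when the $k$-th (final) iteration was entered, the value of \emph{step} was $k-1$ and the value of \emph{discard} was at most its final value $D$ (since \emph{discard} is monotonically non-decreasing). The guard therefore gave
\[ k - 1 \;\leq\; 100\sqrt{n} \cdot \frac{\mbox{\emph{discard}}}{|\mathcal{C}|} \;\leq\; 100\sqrt{n} \cdot \frac{D}{|\mathcal{C}|}, \]
so $k \leq 100\sqrt{n} \cdot D/|\mathcal{C}| + 1$. (Note that $|\mathcal{C}|$ is the value at the start of the call, which does not change during the call.) Plugging this into $Q \leq 1 + 2k$ yields
\[ Q \;\leq\; 1 + 2\!\left(100\sqrt{n} \cdot \frac{D}{|\mathcal{C}|} + 1\right) \;=\; D \cdot \frac{200\sqrt{n}}{|\mathcal{C}|} + 3, \]
which is the claimed bound.

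The only subtlety, and the place where a small amount of care is required, is making sure that the \emph{discard} value being bounded in the guard inequality is the one just before the last iteration rather than the final $D$; monotonicity of \emph{discard} resolves this. Everything else is routine counting, so I would not expect any real obstacle in writing this out in full.
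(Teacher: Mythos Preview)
Your argument is correct and is essentially the same bookkeeping as the paper's proof: both bound the final value of \emph{step} via the loop guard (getting $\emph{step} \le 100\sqrt{n}\cdot D/|\mathcal{C}| + 1$) and then observe that the total number of queries is at most $2\cdot\emph{step}+1$. Your explicit invocation of the monotonicity of \emph{discard} makes a point the paper leaves implicit, but otherwise the two proofs coincide.
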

\begin{proof}
We note that the variable \emph{step} contains the number of pointer queries made so far, and the variable \emph{discard} maintains the number of columns marked so far for discarding.
Every time the \emph{while} loop is entered, \emph{step} $\leq 100 \sqrt{n} \cdot \frac{\mbox{\emph{discard}}}{|\mathcal{C}|}$. In each iteration of the \emph{while} loop, step goes
up by $1$. So at any point, \emph{step} $\leq 100 \sqrt{n} \cdot \frac{\mbox{\emph{discard}}}{|\mathcal{C}|} + 1$. The lemma follows by observing that the total number of bit-entries
queried is at most one more than total number of pointer-entries queried.
\end{proof}
We now use Lemma\lref{mt} to bound the total number of queries made by Algorithm\lref{randDT0}.
\begin{lemma}
\label{query}
Algorithm\lref{randDT0} makes $\widetilde{O}(\sqrt{n})$ queries on each input.
\end{lemma}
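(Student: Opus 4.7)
The plan is to split the total query cost of Algorithm\lref{randDT0} into three parts and bound each: (i) queries made by the \textsc{MilestoneTrace} calls inside the main \emph{for}-loop, (ii) queries made after the loop to read all remaining columns of $\mathcal{C}$, and (iii) queries made by the final call to \textsc{VerifyColumn}. Since one ``query'' here means reading one bit-entry or one pointer-entry of a cell, an extra factor of $\lceil \log n \rceil$ turns cell queries into bit queries and is absorbed into the $\widetilde{O}(\cdot)$.

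For part (ii), control only reaches that block when $|\mathcal{C}| \leq 100$, so reading all of those columns costs at most $100 \cdot \sqrt{n}$ cell queries, which is $\widetilde{O}(\sqrt{n})$ bit queries. For part (iii), \textsc{VerifyColumn} inspects the $\sqrt{n}$ bit-entries of a single column and then follows a pointer chain of length at most $\sqrt{n}$, giving $O(\sqrt{n})$ cell queries, again $\widetilde{O}(\sqrt{n})$ bit queries. These two bounds are essentially immediate from inspecting the code.

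The heart of the argument is part (i). Let $T = O(\sqrt{n} \log n)$ denote the loop bound, and for each iteration $t$ in which \textsc{MilestoneTrace} is actually invoked, let $a_t = |\mathcal{C}|$ at the start of the iteration and $D_t$ be the number of columns discarded in that call, so that $a_{t+1} = a_t - D_t$ and $a_t \geq 100$. Applying Lemma\lref{mt} to each call, the total number of cell queries in the loop is at most
\[
\sum_{t} \left( D_t \cdot \frac{200\sqrt{n}}{a_t} + 3 \right) \;=\; 200\sqrt{n} \cdot \sum_{t} \frac{D_t}{a_t} \;+\; O(\sqrt{n}\log n).
\]
Iterations that take the early-exit branch to step\lref{out1} make no \textsc{MilestoneTrace} call and contribute nothing, so they can be ignored in this bookkeeping.

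The only real step, which I expect to be the main (but still mild) obstacle, is bounding $\sum_t D_t/a_t$ by $O(\log n)$. The plan is a standard telescoping harmonic estimate: since $1/a_t \leq 1/k$ for every $k \in \{a_{t+1}+1, \ldots, a_t\}$,
\[
\frac{D_t}{a_t} \;=\; \frac{a_t - a_{t+1}}{a_t} \;\leq\; \sum_{k=a_{t+1}+1}^{a_t} \frac{1}{k},
\]
and summing over $t$ telescopes to $\sum_{k=a_{\mathrm{last}}+1}^{a_1} 1/k \leq H(\sqrt{n}) = O(\log n)$ since $a_1 \leq \sqrt{n}$. Plugging back gives $\widetilde{O}(\sqrt{n})$ for part (i), and combining with parts (ii) and (iii) yields the desired $\widetilde{O}(\sqrt{n})$ bit-query bound, proving Lemma\lref{query}.
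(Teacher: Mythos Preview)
Your proof is correct and follows essentially the same route as the paper: bound the post-loop work trivially, then invoke Lemma\lref{mt} together with the harmonic telescoping estimate $\sum_t D_t/a_t \leq \sum_{k\leq \sqrt{n}} 1/k = O(\log n)$ for the \textsc{MilestoneTrace} calls. The only cosmetic difference is that the paper separates out the $b_{i,j}=1$ calls (since Lemma\lref{mt} is stated only under the hypothesis $b_{i,j}=0$), whereas you implicitly absorb them into the $+3$ slack; either way the bound holds.
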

\begin{proof}
Whenever $b_{i,j}=1$, \textsc{MilestoneTrace}$(M,\mathcal{C},i,j)$ returns after reading $b_{i,j}$. So the total number of queries made by all calls to \textsc{MilestoneTrace}$(M,\mathcal{C},i,j)$ on such inputs is $\widetilde{O}(\sqrt{n})$. \\
After leaving the \emph{while} loop, the total number of queries required to read constantly many columns in $\mathcal{C}$ and to run \textsc{VerifyColumn} is $O(\sqrt{n})$. \\
Since inside the \emph{while} loop all the queries are made inside the \textsc{MilestoneTrace} procedure, it is enough to show that the total number of queries made by all calls to \textsc{MilestoneTrace}$(M,\mathcal{C},i,j)$ on inputs for which $b_{i,j}=0$ is $\widetilde{O}(\sqrt{n})$. \\
Let $t=\widetilde{O}(\sqrt{n})$ be the total number of calls to \textsc{MilestoneTrace} on such inputs, made in the entire run of Algorithm\lref{randDT0}. Let $s_i$ be the value of $|\mathcal{C}|$ when the $i$-th call to \textsc{MilestoneTrace} is made, and let $s_{t+1}$ be the value of $|\mathcal{C}|$ after the execution of the $t$-th call to \textsc{MilestoneTrace} completes . Let $\Delta s_i$ and $\Delta q_i$ respectively be the number of columns discarded and number of queries made in the $i$-th call to \textsc{MilestoneTrace}. Since $\mathcal{C}$ shrinks only when $b_{i,j}=0$, we have $\Delta s_i=s_{i+1}-s_i$ for $i=1\ldots t$. Since $s_1=\sqrt{n}$, we have that for $i=2,\ldots,t$, $s_i=\sqrt{n}-\displaystyle\sum_{j=1}^{i-1} \Delta s_j$.\\
From lemma\lref{mt} we have $\Delta q_i \leq \Delta s_i \cdot \frac{200 \sqrt{n}}{s_i} + 3$ for $i=1,\ldots,t$. Substituting $\sqrt{n}-\sum_{j=1}^{i-1} \Delta s_j$ for $s_i$ when $i>1$, and adding, we have,
\begin{align}
\sum_{i=1}^t \Delta q_i & \leq 200 \sqrt{n} \cdot \sum_{i=1}^t \frac{\Delta s_i}{s_i}  + 3t \nonumber \\
&=200 \sqrt{n} \cdot \left( \frac{\Delta s_1}{\sqrt{n}} + \sum_{i=2}^t \frac{\Delta s_i}{\sqrt{n}-\sum_{j=1}^{i-1} \Delta s_j} \right) + \widetilde{O}(\sqrt{n})\nonumber \\
&\leq 200 \sqrt{n} \cdot \left( \left( \frac{1}{\sqrt{n}} + \frac{1}{\sqrt{n}-1} + \ldots + \frac{1}{\sqrt{n}-\Delta s_1+1}\right) + \right. \nonumber \\
& \qquad \qquad \qquad \left.\left( \frac{1}{\sqrt{n}-\Delta s_1} + \frac{1}{\sqrt{n}-\Delta s_1-1} + \ldots + \frac{1}{\sqrt{n}-\Delta s_1-\Delta s_2+1}\right) + \ldots \right. \nonumber \\
& \qquad \qquad \qquad \left. + \left(\frac{1}{\sqrt{n}-\sum_{j=1}^{t-1}\Delta s_j} + \frac{1}{\sqrt{n}-\sum_{j=1}^{t-1}\Delta s_j-1} + \ldots + 
\frac{1}{\sqrt{n}-\sum_{j=1}^{t-1}\Delta s_j-\Delta s_t +1}\right)\right)+ \widetilde{O}(\sqrt{n}) \nonumber \\
& \leq O(\sqrt{n}) \cdot \left(\sum_{i=1}^{\sqrt{n}} \frac{1}{i}\right) + \widetilde{O}(\sqrt{n})\nonumber \\
& =O(\sqrt{n} \log n) +\widetilde{O}(\sqrt{n})\nonumber \\
&=\widetilde{O}(\sqrt{n}). \nonumber
\end{align}
Hence proved.
\end{proof}

\subsection{Success Probability of Algorithm\lref{randDT0}}
In this section we prove that Algorithm\lref{randDT0} outputs correct answer with probability $1$ on $0$-inputs and with probability at least $2/3$ on $1$-inputs. We start by a proving a
probability statement
(Lemma\lref{probability}) that will help us in the analysis. \\
Let $x_1, \ldots, x_\ell$ be non-negative real numbers and $\sum_{i=1}^\ell x_i=N$. We say that an index $I \in [\ell]$ is \emph{bad} if there exists a non-negative integer
$0 \leq D \leq N-I$
such that
 \[\sum_{i=I}^{I+D} x_i > 100 (D+1) \cdot \frac{N}{\ell}\]
 We say that an index $I$ is \emph{good} if $I$ is not bad.
 \begin{lemma}
  \label{probability}
  Let $I$ be chosen uniformly at random from $[\ell]$. Then,
  \[\mathbb{P}[\mbox{$I$ is good}] > \frac{99}{100}\]
 \end{lemma}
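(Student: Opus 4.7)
The plan is to control the number of bad indices directly: I will show that $|B| < \ell/100$, where $B \subseteq [\ell]$ is the set of bad indices, from which $\Pr[I \text{ is good}] = 1 - |B|/\ell > 99/100$ follows by uniform sampling. The strategy is a Vitali-style greedy covering of $B$ by pairwise disjoint intervals, each of which carries a disproportionate share of the total mass $N$.

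First, I would build the cover inductively. Set $J_1 := \min B$ and pick any witness $D_1 \ge 0$, i.e., one satisfying $\sum_{i=J_1}^{J_1+D_1} x_i > 100(D_1+1) N/\ell$. Having constructed $(J_1, D_1), \ldots, (J_k, D_k)$, let $J_{k+1}$ be the smallest element of $B$ strictly exceeding $J_k + D_k$ (terminating when no such element exists), and pick any witness $D_{k+1}$. By construction, the resulting intervals $[J_k, J_k + D_k]$ are pairwise disjoint.

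Next, the key combinatorial claim: every bad index is covered by one of these intervals. Indeed, $J_1$ is the smallest bad index, so nothing bad lies below it; and by the minimality in the choice of $J_{k+1}$, no bad index lies strictly between $J_k + D_k$ and $J_{k+1}$. A short induction on the position of an arbitrary $I \in B$ then places it in some $[J_k, J_k + D_k]$, giving $|B| \le \sum_k (D_k + 1)$.

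Finally, I would combine the witness inequalities with the disjointness of the intervals and non-negativity of the $x_i$: summing the strict bounds gives $\sum_k 100 (D_k + 1) N/\ell < \sum_k \sum_{i=J_k}^{J_k+D_k} x_i \le \sum_{i=1}^\ell x_i = N$, which rearranges to $\sum_k (D_k + 1) < \ell/100$ and hence $|B| < \ell/100$. The only subtle point in the argument is the covering claim; the main obstacle is to ensure the greedy scan does not skip a bad index lying in a gap between two chosen intervals, but this is exactly what is prevented by taking $J_{k+1}$ to be the next bad index rather than merely the next unchosen one.
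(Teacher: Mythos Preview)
Your proof is correct and follows essentially the same approach as the paper's: a greedy procedure that covers all bad indices by pairwise disjoint intervals $[J_k,J_k+D_k]$, each witnessing the bad inequality, and then uses disjointness plus $\sum_i x_i=N$ to bound the total covered length by $\ell/100$. The only difference is presentational (you phrase it as a Vitali-style cover and spell out the induction on the covering claim a bit more explicitly), not mathematical.
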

 \begin{proof}
  We show existence of a set $K=\{J_1, \cdots,J_t\}$ of disjoint sub-intervals of $[1,\ell]$ with integer end-points, having the following properties:
  \begin{enumerate}
   \item Every bad index is in some interval $J_i \in K$.
   \item $\forall 1 \leq i \leq t, \sum_{j \in J_i} x_j > 100 |J_i| \cdot \frac{N}{\ell}$.
  \end{enumerate}
  It then follows that the number of bad indices is upper bounded by $\sum_{i \in [t]}|J_i|$ (by property $1$ and disjointness of the intervals). But
  $N \geq \sum_{i \in [t]} \sum_{j \in J_i} x_j >
  100  \cdot \frac{N}{\ell} \sum_{i \in t} |J_i|$ , which gives us that $\sum_{i \in t} |J_i| < \frac{\ell}{100}$. In the above chain of inequalities, the first inequality follows
  from 
  the disjointness of $J_i$'s and the second inequality follows from property $2$. \\
  Now we describe a greedy procedure to obtain such a set $K$ of intervals. Let $j$ be the smallest bad index. Then there exists a $d$ such that
  $\sum_{i \in [j,j+d]} x_i > 100 (d+1) \cdot \frac{N}{\ell}$.
  We include the interval $[j,j+d]$ in $K$. Then let $j'$ be the smallest bad index greater than $j+d$. Then there exists a $d'$ for which
  $\sum_{i \in [j',j+d']} x_i > 100 (d'+1) \cdot \frac{N}{\ell}$. We include
  $[j', j'+d']$ in $K$. We continue in this way till there is no bad index which is not already contained in some interval in $K$. It is easy to verify that the intervals in the set
  $K$ thus formed
  are disjoint, and the set $K$ satisfies properties $1$ and $2$.
  \end{proof}
  Let us begin by showing that algorithm\lref{randDT0} is correct with probability $1$ on $0$ inputs of $F$.

  \begin{claim}
   \label{verify}
   If Procedure \textsc{VerifyColumn} outputs $1$ on inputs $M$ and $k$, then $M$ is a $1$ input of $F$.
  \end{claim}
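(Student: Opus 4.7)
The plan is to verify directly that the three checks performed inside \textsc{VerifyColumn} are sufficient to establish both conditions in the definition of $F(M)=1$, essentially matching each line of the procedure to a clause in that definition.

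First, if \textsc{VerifyColumn} outputs $1$ then the first line did not reject, so every bit-entry $b_{i,k}$ in column $k$ equals $1$. This gives an all-$1$'s column with index $j_1 := k$, which supplies the first bullet of the definition of $F(M)=1$. The second check ensures column $k$ contains at least one non-null pointer-entry, so we can let $(i_1,k)$ be the cell of smallest row index in column $k$ whose pointer-entry is non-null; all cells above $(i_1, k)$ in column $k$ have null pointer-entries by choice, which is precisely condition $3$ in the definition of a valid pointer chain (with $j_1 = k$). The third check then certifies that the chain starting at $(i_1,k)$ is valid, so conditions $1$, $2$, and $4$ of the valid-pointer-chain definition hold along with the coverage-of-all-columns requirement appearing in the specification of $F=1$. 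Combining these observations, both conditions in the definition of $F(M)=1$ are met, and hence $M$ is a $1$-input of $F$.

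There is essentially no obstacle in this argument; the claim is definitional, and the proof amounts to reading off the checks of \textsc{VerifyColumn} against the definition of $F$. The one subtlety worth flagging is that uniqueness of the all-$1$'s column mentioned in the informal statement of $F=1$ is automatic: for $\ell \geq 2$ the cell $(i_\ell, j_\ell)$ of the valid pointer chain has bit-entry $0$ by condition $2$, so no column other than $j_1 = k$ can be all-$1$'s, and no separate check for uniqueness is required.
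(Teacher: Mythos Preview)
Your argument is correct and follows essentially the same route as the paper, which simply observes that \textsc{VerifyColumn} outputs $1$ only when column $k$ is all $1$'s and the pointer chain from its first non-null pointer is valid, and then appeals to the definition of $F$. Your version is more explicit and also addresses the uniqueness of the all-$1$'s column, which the paper leaves implicit.
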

  \begin{proof}
   \textsc{VerifyColumn} outputs $1$ only if the column $k$ has all its bit-entries equal to $1$, and if the pointer chain starting from the first non-null pointer entry is valid
   (recall the definition of a \emph{valid pointer chain} from Section\lref{us}).
  From the definition of $F$, for such inputs $F$ evaluates to $1$. 
  \end{proof}
  
  \begin{corollary}
   \label{one-sided}
   Let $M$ be a $0$-input of of $F$. Then algorithm\lref{randDT0} outputs $0$ with probability $1$.
  \end{corollary}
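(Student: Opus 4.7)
The plan is to argue by contrapositive: since the only path through Algorithm~\ref{randDT0} that can produce the output $1$ is via a call to \textsc{VerifyColumn} that itself returns $1$, Claim~\ref{verify} forces the input to be a $1$-input whenever the algorithm outputs $1$; contrapositively, on every $0$-input the algorithm must output $0$.

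Concretely, I would first inspect Algorithm~\ref{randDT0} to enumerate its output statements. The main loop contains only calls to \textsc{MilestoneTrace}, which has no output statement at all (it only updates $\mathcal{C}$). After the loop, the algorithm either outputs $0$ directly (in the branches $|\mathcal{C}|>100$, $|\mathcal{C}|=0$, or when no column in $\mathcal{C}$ has all bit-entries equal to $1$), or it invokes \textsc{VerifyColumn}$(M,k)$ on some column $k$ whose bit-entries it has just verified to be $1$, and returns whatever that procedure returns. Thus every execution that outputs $1$ must do so through \textsc{VerifyColumn} outputting $1$.

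Now let $M$ be any $0$-input of $F$. If at some point in an execution the algorithm calls \textsc{VerifyColumn}$(M,k)$, then by Claim~\ref{verify} the call can output $1$ only if $M$ is a $1$-input, which contradicts the assumption. Hence \textsc{VerifyColumn}$(M,k)$ outputs $0$. Combined with the observation that all other output statements in Algorithm~\ref{randDT0} produce $0$, we conclude that on every $0$-input and for every setting of the random coins, the algorithm outputs $0$, i.e., with probability $1$.

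There is no real obstacle here; the statement is an immediate consequence of Claim~\ref{verify} together with a straightforward case analysis of the algorithm's output statements. The only minor care needed is to verify that \textsc{MilestoneTrace} does not itself produce an output, which is clear from its pseudocode since it only reads entries and modifies $\mathcal{C}$.
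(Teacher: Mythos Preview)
Your proof is correct and follows essentially the same approach as the paper: both argue by contrapositive that any execution outputting $1$ must do so through \textsc{VerifyColumn} returning $1$, which by Claim~\ref{verify} forces the input to be a $1$-input. Your version is simply a more detailed case analysis of the algorithm's output statements, but the logical structure is identical.
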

  \begin{proof}
   The corollary follows by observing that if algorithm\lref{randDT0} returns $1$, a call to \textsc{VerifyColumn} also returns $1$, and hence from Claim\lref{one-sided} the input
   is a $1$-input of $F$.
  \end{proof}
  
  Let us now turn to $1$-inputs of $F$. Let $M$ be a $1$-input of $F$, that we fix for the rest of this subsection. Without explicit mention, for the rest of the subsection we assume
  that Algorithm\lref{randDT0} is run on $M$. Since $M$ is a $1$-input, by the definition of $F$, there is a column $C$ such that all its bit-entries are $1$, and the pointer chain
  starting from the
  first non-null pointer-entry of $j$ is valid. Call this pointer chain the \emph{principle chain}. Let $(C=c_1,\ldots,c_{\sqrt{n}})$ be the order of columns of $M$ in which the
  pointer chain crosses them. Let $(C=m_1, \ldots, 
  m_{|\mathcal{C}|})$ be the order of the columns of
  $\mathcal{C}$ in which the pointer chain crosses them. Note that the column $C$ always belongs to $\mathcal{C}$, as a column is discarded only if the bit-entry of some cell on it is $0$. Define $X_i$ to be the number of $c_j$'s between $m_i$ and $m_{i+1}$, including $m_i$, if $i < |\mathcal{C}|$, and the number of $c_j$'s after $m_i$, including $m_i$,
  if $i=|\mathcal{C}|$. Clearly $\displaystyle\sum_{i=1}^{|\mathcal{C}|} X_i=\sqrt{n}$.
  
  \begin{lemma}
   \label{hit}
  Let $(i,j)$ be a randomly chosen cell on the restriction of the principle chain to the columns in $\mathcal{C}$ (i.e. $j \in \mathcal{C}$) and let $|\mathcal{C}|=N \geq 100$. Then with
  probability at least $\frac{97}{100}$ over the choice of $(i,j)$, a run of the procedure 
  \textsc{TraceMilestone} on inputs $M,\mathcal{C}, i,j$ shrinks the size of $\mathcal{C}$ to at least $\frac{99N}{100}$.
  \end{lemma}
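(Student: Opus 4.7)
The plan is to reduce Lemma~\lref{hit} to Lemma~\lref{probability} applied to the sequence $X_1,\ldots,X_N$ (nonnegative integers summing to $\sqrt{n}$, with $\ell=N$), and then to verify that the \emph{while} loop of \textsc{MilestoneTrace} is self-sustaining whenever the randomly chosen index is ``good''.

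First I would observe that picking $(i,j)$ uniformly from the cells of the principle chain that lie on columns of $\mathcal{C}$ is the same as picking $I\in[N]$ uniformly and taking $(i,j)$ to be the unique chain-cell on column $m_I$. Lemma~\lref{probability} then tells us that $I$ is good with probability strictly greater than $99/100$, i.e., $\sum_{i=I}^{I+D}X_i\leq 100(D+1)\sqrt{n}/N$ for every $0\leq D\leq N-I$. I next isolate the bad events $E_1=\{I=1\}$ (where the starting cell has bit-entry $1$ because $m_1=C$, so \textsc{MilestoneTrace} returns at line~\lref{loss}) and $E_2=\{I>99N/100+1\}$ (where even reaching $m_N$ would add fewer than $N/100$ new columns to \emph{seen}). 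Both probabilities are at most $1/100$ because $N\geq 100$, so the union bound with goodness leaves probability strictly greater than $97/100$ for the complementary event.

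The substance of the proof is to show that under goodness and $\neg E_1$ the \emph{while} loop runs all the way from $m_I$ to the end of the principle chain without exiting through the threshold condition. I would establish this by induction on $k\in\{0,1,\ldots,N-I\}$, proving that immediately after $\sum_{i=I}^{I+k-1}X_i$ iterations the variable \emph{current} sits on $m_{I+k}$, $\mathit{step}=\sum_{i=I}^{I+k-1}X_i$ and $\mathit{discard}=k+1$. At the top-of-loop check that follows reaching $m_{I+k}$, goodness with $D=k-1$ gives $\mathit{step}\leq 100k\sqrt{n}/N<100(k+1)\sqrt{n}/N$, while during the $X_{I+k}$ intermediate pointer-follows before $m_{I+k+1}$ is reached, $\mathit{discard}$ is stuck at $k+1$ and $\mathit{step}$ grows only to $\sum_{i=I}^{I+k}X_i-1<100(k+1)\sqrt{n}/N$ by goodness with $D=k$. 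Consequently the loop survives until the null pointer of the last chain-cell is read and control jumps to line~\lref{updateC}; at that moment $\mathit{seen}=\{m_I,m_{I+1},\ldots,m_N\}$, and $|\mathcal{C}|$ shrinks by $N-I+1$, which is at least $N/100$ under $\neg E_2$.

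The delicate point is the interaction between the top-of-loop check and the in-iteration growth of $\mathit{step}$: one has to verify the inequality both at the precise moment a new $m$-column causes $\mathit{discard}$ to increment and throughout the intervening non-$\mathcal{C}$ cells where $\mathit{discard}$ is stale. Once this inductive bookkeeping is in place, Lemma~\lref{hit} follows from the three-way union bound described above.
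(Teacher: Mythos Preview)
Your proposal is correct and follows essentially the same approach as the paper: apply Lemma~\lref{probability} to the sequence $(X_i)$, union-bound away the events that $I$ is bad, that $I=1$ (i.e.\ $j=C$), and that $I$ is too close to the end, and then argue that goodness of $I$ prevents the \emph{while} condition from ever failing along the principle chain. Your inductive bookkeeping spells out in detail what the paper compresses into the single remark ``if the condition in the while loop is violated while \emph{current} is on the principle chain, it implies that $j$ is a bad index''; you also prove the slightly stronger (but unnecessary) statement that the traversal reaches the very end of the chain rather than merely the next $N/100$ columns of~$\mathcal{C}$.
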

  
  \begin{proof}
  By applying Lemma\lref{probability} on the sequence $(X_i)_{i=1}^{|\mathcal{C}|}$ described in the paragraph preceding this lemma, except with probability at least 
  $1/100 + 1/100 + 1/|\mathcal{C}| \leq 3/100$, $j$ is a good
  index,  $j<\frac{99N}{100}$ (i.e. the column $j$ has at least $\frac{N}{100}$ columns of $\mathcal{C}$ ahead of it on the principle chain), and $j \neq C$. Since $j \neq C$, the 
  bit-entry of the cell sampled is 
  $0$, and hence procedure \textsc{TraceMilestone} does not return control in step\lref{loss}. In the procedure \textsc{TraceMilestone}, 
  if \emph{current} is on the principle chain, the condition in line\lref{null} cannot be satisfied unless \emph{current} is the last cell on the chain. Now, if the condition in the 
  while loop is violated while \emph{current} is on the principle chain,
  it implies that $j$ is a bad index. Thus with probability at least $1-3/100=97/100$, the procedure does not terminate as long as all the $\frac{N}{100}$ columns ahead of $j$ are 
  not seen. Since all columns in
  $\mathcal{C}$ that are seen are discarded, we have the lemma.
  \end{proof}
  
  Now, let us bound the number of iterations of the \emph{for} loop of algorithm\lref{randDT0} required to shrink $|\mathcal{C}|$ by a factor of $1/100$.
  
  \begin{lemma}
  \label{itno}
  Assume that at a stage of execution of algorithm\lref{randDT0} where the control is in the beginning of the \emph{for} loop, $|\mathcal{C}|=N$. Then except with probability $1/25$, after $10 \sqrt{n}$
  iterations of the \emph{for} loop, $|\mathcal{C}|$ will become at most $99N/100$. 
  \end{lemma}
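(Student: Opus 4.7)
The plan is to reduce the claim to a concentration argument driven by Lemma\lref{hit}. The key structural observation is that $\mathcal{C}$ is non-increasing across iterations of the \emph{for} loop, so it suffices to exhibit a single ``successful'' iteration, i.e.\ one whose call to \textsc{MilestoneTrace} shrinks $|\mathcal{C}|$ to at most $99/100$ of its current value. Any such success, occurring while $|\mathcal{C}| \leq N$ still holds, immediately drives $|\mathcal{C}|$ to at most $99N/100$.

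I would then lower bound the per-iteration success probability conditioned on the history. In any iteration where $|\mathcal{C}| \geq 100$, the algorithm picks $j$ uniformly from the current $\mathcal{C}$ and $i$ uniformly from $[\sqrt{n}]$. Because the principle chain visits each column of $M$ exactly once, there is a unique row $i^\star(j)$ at which the chain crosses column $j$, so the sampled cell lies on the principle chain with probability exactly $1/\sqrt{n}$. Conditioned on this event, $(i,j)$ is uniformly distributed over the $|\mathcal{C}|$ principle-chain cells lying in columns of $\mathcal{C}$ --- exactly the input distribution that Lemma\lref{hit} requires. That lemma then yields conditional probability at least $97/100$ that the ensuing \textsc{MilestoneTrace} call shrinks $|\mathcal{C}|$ by the desired factor. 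Multiplying, every iteration is successful with (conditional) probability at least $\tfrac{97}{100 \sqrt{n}}$, as long as the goal has not yet been reached.

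Since this lower bound holds conditional on the full past, the probability that none of the $10\sqrt{n}$ iterations is successful is at most $\left(1 - \tfrac{97}{100\sqrt{n}}\right)^{10\sqrt{n}} \leq e^{-9.7} < 1/25$. If the \emph{for} loop instead terminates early because $|\mathcal{C}|$ drops below $100$, then (using $N \geq 100$) $|\mathcal{C}| \leq 99 \leq 99N/100$ and the conclusion holds trivially. The main obstacle is the bookkeeping to certify that the per-iteration success probability is indeed in force conditional on the past; I would formalize this by introducing the stopping time $\tau$ at which $|\mathcal{C}|$ first falls to at most $99N/100$, dominating $\tau$ stochastically by a geometric random variable with parameter $\tfrac{97}{100\sqrt{n}}$, and invoking the standard tail bound for the geometric.
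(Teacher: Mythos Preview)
Your proposal is correct and follows essentially the same approach as the paper. The paper's own proof separates the failure event into ``never hit the principle chain in $10\sqrt{n}$ samples'' (probability $\leq (1-1/\sqrt{n})^{10\sqrt{n}} \leq 1/100$) and ``hit the chain but \textsc{MilestoneTrace} fails to shrink'' (probability $\leq 3/100$ via Lemma\lref{hit}), and union-bounds to $1/25$; you instead multiply the two probabilities to get a per-iteration success bound of $97/(100\sqrt{n})$ and apply a geometric tail bound, arriving at the same conclusion with somewhat more careful conditioning on the history.
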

  \begin{proof}
   The probability that a cell on the principle chain is sampled in steps\lref{sample1} and\lref{sample2} is $\frac{1}{\sqrt{n}}$. So the probability that in none of the 
   $10 \sqrt{n}$ executions of steps\lref{sample1} and\lref{sample2}, a cell on the principle chain is picked is $(1-\frac{1}{\sqrt{n}})^{10 \sqrt{n}} \leq \frac{1}{100}$.
   Conditioned on the event that a cell on the principle chain is sampled, from lemma \lref{hit}, except with probability $3/100$, $|\mathcal{C}|$ reduces by a factor of $1/100$ in
   the following run of \textsc{MilestoneTrace}. Union bounding we have that except with probability $1/100+3/100=1/25$, after $10 \sqrt{n}$  iterations of the \emph{for} loop,
   $|\mathcal{C}| \leq 99N/100$.
  \end{proof}
  
 Let $t$ be the minimum integer such that $\sqrt{n} \cdot \left(\frac{99}{100}\right)^t < 100$. Thus $t=O(\log n)$. For $i=1,\ldots,t$, let the random variable $Y_i$ be equal to the index of the
 first iteration of the  \emph{for} loop of Algorithm\lref{randDT0} after which $|\mathcal{C}| \leq \sqrt{n}.\left(\frac{99}{100} \right)^i$. Let $Z_1=Y_1$ and for $i=2,\ldots,t$
 define $Z_i=Y_i-Y_{i-1}$. From Lemma\lref{itno}, for each $i$ we have $\mathbb{E}[Z_i] \leq 25 \times 10 \sqrt{n}=O(\sqrt{n})$. By linearity of expectation, we have
 $\mathbb{E}[\displaystyle\sum_{i=1}^t Z_i]=O(\sqrt{n} \log n)$. By Markov's inequality, with probability at least $2/3$, $\displaystyle\sum_{i=1}^t Z_i=O(\sqrt{n} \log n)$. Thus,
 if we choose the constant hidden in the number of iterations of the \emph{for} loop of Algorithm\lref{randDT0} large enough, then with probability at least $2/3$, $|\mathcal{C}|$ shrinks to less than 100.
 Then the \textsc{VerifyColumn} procedure reads all the columns in $\mathcal{C}$ and outputs the correct value of $F$. Thus we have proved the following Lemma.
 
  \begin{lemma}
   \label{success}
   With probability at least $2/3$, algorithm \lref{randDT0} outputs $1$ on a $1$-input.
  \end{lemma}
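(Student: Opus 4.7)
The plan is to combine Lemma \lref{itno}, which gives a per-phase shrinkage guarantee, with a geometric-tail argument to control the total number of \emph{for}-loop iterations needed. Since the all-$1$'s column $C$ of the $1$-input $M$ has every bit-entry equal to $1$, it is never discarded, so $C$ remains in $\mathcal{C}$ throughout the execution. Thus, if we can argue that $|\mathcal{C}|$ shrinks below $100$ within the allotted $O(\sqrt{n}\log n)$ iterations of the \emph{for} loop with probability at least $2/3$, then in the final step of Algorithm\lref{randDT0} the surviving columns are read in full, the column $C$ is identified, and \textsc{VerifyColumn} (by Claim\lref{verify} combined with the definition of $F$) outputs $1$.

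To make the shrinkage argument precise, I would let $t$ be the smallest integer with $\sqrt{n}\cdot(99/100)^t < 100$, so $t = O(\log n)$. Define $Y_i$ as the index of the first iteration of the \emph{for} loop after which $|\mathcal{C}| \leq \sqrt{n}\cdot(99/100)^i$, and set $Z_1 = Y_1$ and $Z_i = Y_i - Y_{i-1}$ for $i \geq 2$. Lemma\lref{itno} states that any window of $10\sqrt{n}$ consecutive iterations, begun while $|\mathcal{C}| = N \geq 100$, shrinks $|\mathcal{C}|$ to at most $99N/100$ except with probability $1/25$. Consequently, each $Z_i$ is stochastically dominated by $10\sqrt{n}$ times a geometric random variable with success probability $24/25$, giving $\mathbb{E}[Z_i] = O(\sqrt{n})$.

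By linearity of expectation, $\mathbb{E}\bigl[\sum_{i=1}^{t} Z_i\bigr] = O(\sqrt{n}\log n)$, so Markov's inequality yields $\sum_{i=1}^{t} Z_i = O(\sqrt{n}\log n)$ with probability at least $2/3$. Choosing the hidden constant in the number of \emph{for}-loop iterations of Algorithm\lref{randDT0} to exceed this bound, the \emph{for} loop terminates with $|\mathcal{C}| < 100$ with probability at least $2/3$; on this event, \textsc{VerifyColumn} is invoked on the all-$1$'s column $C$ and outputs $1$ since its validity is exactly the definition of a $1$-input.

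The main technical obstacle I anticipate is verifying the geometric-domination step that converts the per-window failure probability $1/25$ from Lemma\lref{itno} into the bound $\mathbb{E}[Z_i] = O(\sqrt{n})$. The subtlety is that Lemma\lref{itno} applies only when $|\mathcal{C}| \geq 100$ at the \emph{start} of the window, and within phase $i$ we have $|\mathcal{C}| > \sqrt{n}\cdot(99/100)^i \geq 100$ by choice of $t$, so the hypothesis is maintained throughout. With this observation the geometric argument goes through cleanly, and the rest is routine.
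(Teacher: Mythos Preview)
Your proposal is correct and follows essentially the same argument as the paper: define the thresholds $Y_i$ and increments $Z_i$, use Lemma\lref{itno} to bound $\mathbb{E}[Z_i]=O(\sqrt{n})$, apply linearity and Markov to get $\sum Z_i = O(\sqrt{n}\log n)$ with probability $\geq 2/3$, and then invoke \textsc{VerifyColumn}. Your version is in fact slightly more careful than the paper's, since you make explicit the geometric-domination step and the observation that $|\mathcal{C}|\geq 100$ is maintained throughout phase~$i$ (so the hypothesis of Lemma\lref{itno} applies to each successive window), and you also spell out why the all-$1$'s column $C$ survives in $\mathcal{C}$.
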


\section{Zero error query complexity of $F$}
\label{ZERO-ERROR}

We first present a randomized query algorithm which satisfies the following: If the algorithm outputs $0$ then the given input is a 0-input (The algorithm actually exhibits a $0$
-certificate) and if the given input is a 0-input, then the algorithm outputs $0$ with high probability. This algorithm makes $\tilde{O}(n^{3/4})$ queries in worst case.
For the randomized zero-error algorithm we run Algorithm \ref{randDT0} and this algorithm one after another. If Algorithm \ref{randDT0} outputs 1 then we stop and output $1$.
Else, if Algorithm \ref{randDT1} says 0, we stop and output $0$. Otherwise, we repeat. By the standard argument of $\mathsf{ZPP}=\mathsf{RP}\cap \mathsf{coRP}$ we get the randomized
zero-error algorithm. Though the query complexity of Algorithm \ref{randDT0} is $\tilde{O}(\sqrt{n})$, we get the zero-error query complexity of $F$ to be $\tilde{O}(n^{3/4})$
because of the query complexity of Algorithm \ref{randDT1}.

Now we define the notion of column covering and column span which we will use next.

\begin{definition}
\label{DEF:COVER}
For two columns $C_i$ and $C_j$ in input matrix $M$, we say $C_j$ is covered by $C_i$ if there is a cell $(k,i)$ in $C_i$ and a sequence
$(\beta_1,\delta_1),\dots,(\beta_t,\delta_t)$ of pairs from $[\sqrt{n}] \times [\sqrt{n}]$ such that:
\begin{enumerate}
\item $b_{k,i}=0$,
\item $\delta_t=j$,
\item for all $\ell \in [t]$, $b_{\beta_\ell,\delta_\ell}=0$ and
\item $p_{k,i} = (\beta_1,\delta_1)$ and for $\ell = 1,\dots,t-1, p_{(\beta_\ell,\delta_\ell)} = (\beta_{\ell+1},\delta_{\ell+1})$.
\end{enumerate}
\end{definition}

\begin{definition}
\label{DEF:SPAN}
For a column $C$, we define $\mathsf{Span}_C$ to be the subset of columns in $M$ which consists of $C$ and any column which is covered by $C$.
\end{definition}

We first give an informal description of the algorithm and then we proceed to formally analyze the algorithm in Section \ref{SEC:ANAL-ZERO}. As mentioned before this is also a
one-sided algorithm, i.e., it errs on one side but it errs on the different side than that of Algorithm \ref{randDT0}. The $0$-certificates it attempts to capture are as follows: 

\begin{enumerate}
\item If each the columns has a cell with bit-entry $0$, then the function evaluates to $0$. Those bit-entries form a $0$-certificate.
If there are many $0$'s in each column, The algorithms may capture such a certificate in
the first phase (\emph{sparsification}).
\item Two columns $C_1$ and $C_2$ in $M$ such that $C_1 \notin \mathsf{Span}_{C_2}$ and $C_2\notin \mathsf{Span}_{C_1}$. Existence of two such columns makes the existence
of a valid pointer chain impossible. This is captured in the second phase of the algorithm.
\item Lastly, if there is column all of whose bit-entries are $1$, which does not have a valid pointer chain, then that is also a $0$-certificate. The algorithm may capture
such a certificate in the last phase.
\end{enumerate}

The algorithm proceeds as follows: The main goal of the algorithm is to eliminate any column where it finds a 0 in any of its cells. First the algorithm filters out the columns with
large number of $0$'s with high probability by random sampling. The algorithm probes $n^{1/4}$ locations at random in each column and if it finds any $0$ in any column, it eliminates
that column. This step is called \textit{sparsification}. After sparsification, we are guaranteed that all the columns have small number of 0's. Now the remaining columns can have
either of the following two characteristics: First, a large number of the columns in existing column set have large span. This implies that if we choose a column randomly from the
existing columns, the column will span a large number of columns (i.e., a constant fraction of existing columns) with high probability and we can eliminate all of them. The algorithm
does this exactly in the \textbf{procedure A} of the second phase. The other case can be where most of the columns have small spans. We can show that if this is the case, then if we
pick two random columns $C_i$ and $C_j$ from the set of existing columns, $C_i$ will not lie in the span of $C_j$ and vice-versa with high probability, certifying that $F$ is $0$.
This case is taken care of in
the \textbf{procedure B} of the second phase of the algorithm.

The algorithm runs \textbf{procedure A} and \textbf{procedure B} one after another for logarithmic number of steps. If at any point of the iteration, the algorithm finds two columns
which are not in span of each other, the algorithm outputs $0$ and terminates. Otherwise, as the \textbf{procedure A} eliminates the number of existing columns by a constant factor in each iteration,
with logarithmic number of iteration, either we completely exhaust the column set, which is again a $0$-certificate, or we are left with a single column. Then the algorithm checks the
remaining column and the validity of the pointer chain if that column is an all $1$'s column and answers accordingly. This captures the third kind of 0-certificate as mentioned before.
In Algorithm\lref{randDT1}, we set $\tau$ to be the least number such that $\sqrt{n} \cdot (\frac{99}{100})^\tau \leq 1$. Clearly $\tau=O(\log n)$.

\FloatBarrier
\begin{algorithm}
\caption{}
\label{randDT1}
\begin{algorithmic}[1]

\State $\mathcal{C} :=$ Set of columns in $M$;
\State $\tau:=$ Least number such that $\sqrt{n} \cdot (\frac{99}{100})^\tau \leq 1$;
\For{each column $C$ in $\mathcal{C}$}\label{sp_start}
\State Sample $T=10 \cdot n^{1/4} \log n$ cells uniformly at random;
\If{any bit-entry of any cell is $0$}
\State $\mathcal{C} \gets \mathcal{C} \setminus \{C\}$;
\EndIf

\EndFor\label{sp_end}
\For{$t=1$ to $\tau$}\label{for2}

	\If{$|\mathcal{C}| \leq 1$}
	  \State goto step\lref{out} 
	\EndIf
	\Repeat\label{rep}
	\Procedure{A}{}
	\State Sample a column $C$ from $\mathcal{C}$ uniformly at random;
	\State Read all entries of all cells of $C$;
	\If{All bit-entries are $1$}
		\State \textsc{VerifyColumn$(M,C)$};
	\EndIf
	\If{Number of $0$ bit-entries in $C > n^{1/4}$}\label{if1}
		\State Output $1$ and abort;
	\EndIf
	\State For each cell on $C$ with bit-entry $0$, trace pointer and compute $\mathsf{Span}_C$;
	\State $\mathcal{C} \gets \mathcal{C} \setminus \mathsf{Span}_C$;
	\EndProcedure
	\Until{$\alpha \log \log n$ times}
	\Procedure{B}{}
	\State Pick two columns $C_1$ and $C_2$ uniformly at random from $\mathcal{C}$;
	\If{All bit-entries of $C_1$ ($C_2$) are $1$}
		\State \textsc{VerifyColumn$(M,C_1)$} (\textsc{VerifyColumn$(M,C_1)$});
	\EndIf
	\If{Number of $0$ bit-entries in $C_1$ or $C_2 > n^{1/4}$}\label{if2}
		\State Output $1$ and abort;
	\EndIf
	\If{$C_2 \notin \mathsf{Span}_{C_1}$ and $C_1 \notin \mathsf{Span}_{C_2}$} 
		\State Output $0$ and abort;
	\EndIf
	\EndProcedure
\EndFor 
\If{$\mathcal{C} = \emptyset$}\label{out}
	\State Output $0$;
\EndIf
\If{$|\mathcal{C}|=1$}
	\State Let $\mathcal{C}=\{C\}$.
	\State \textsc{VerifyColumn$(M,C )$};
\EndIf
\State Output $1$.
\end{algorithmic}
\end{algorithm}

\subsection{Analysis}
\label{SEC:ANAL-ZERO}

Let's first look at the running time of the algorithm

\begin{claim}
The query complexity of Algorithm \ref{randDT1} is $\widetilde{O}(n^{3/4})$ in worst case.
\end{claim}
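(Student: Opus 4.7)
The plan is to split Algorithm\lref{randDT1} into its three natural phases---the sparsification block, the main \emph{for} loop, and the final verification block---and bound the worst-case queries contributed by each. Sparsification iterates over at most $\sqrt{n}$ columns and performs $T = 10 n^{1/4}\log n$ samples per column, each sample incurring $O(1)$ queries, so this phase contributes $\sqrt{n}\cdot T = \widetilde{O}(n^{3/4})$ queries.

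For the main loop, the outer \emph{for} loop runs $\tau = O(\log n)$ times, and in each outer iteration Procedure A is invoked $\alpha \log\log n$ times followed by one call to Procedure B. I would bound a single call to A as follows (B is of the same order, since it essentially performs the same span computation for two columns). Reading all entries of the sampled column $C$ costs $O(\sqrt{n})$ queries; the abort check on the number of zero bit-entries in $C$ either terminates the algorithm outright or certifies that $C$ carries at most $n^{1/4}$ zero cells; for each such zero cell we then trace one pointer chain in order to accumulate $\mathsf{Span}_C$. Assuming a per-chain cost of $O(\sqrt{n})$ queries (justified below), the per-call cost is $O(\sqrt{n}) + n^{1/4}\cdot O(\sqrt{n}) = O(n^{3/4})$. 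Multiplying by the $O(\log n\cdot \log\log n)$ total invocations across the main loop yields $\widetilde{O}(n^{3/4})$.

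The final block reads at most one column of $M$ in full and invokes \textsc{VerifyColumn}, each costing $\widetilde{O}(\sqrt{n})$ queries. Summing the three phase estimates delivers the $\widetilde{O}(n^{3/4})$ bound.

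The step that will take the most care is the per-chain $O(\sqrt{n})$ bound inside the span computation, since a naive pointer chain of $0$-cells could in principle have length $\Omega(n)$. The cleanest justification is to track, \emph{during} the span computation for a fixed $C$, the set of columns already incorporated into $\mathsf{Span}_C$, and abandon a chain as soon as it revisits a column already present. Equivalently, each individual chain may be truncated after $\sqrt{n}$ pointer hops, because any longer chain must repeat one of the $\leq \sqrt{n}$ columns visited so far and contributes nothing new to $\mathsf{Span}_C$ from that point on. Once this implementation detail is pinned down, the arithmetic above goes through without further subtlety.
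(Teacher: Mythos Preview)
Your phase-by-phase accounting mirrors the paper's proof exactly: the paper also sums the sparsification cost $O(\sqrt{n}\cdot T)=\widetilde{O}(n^{3/4})$, the cost $O(n^{3/4})$ per invocation of Procedure~A or~B times the $O(\log n\cdot\log\log n)$ invocations inside the main loop, and the $O(\sqrt{n})$ cost of the final \textsc{VerifyColumn}. In fact you give more detail than the paper, which simply asserts that a single call to Procedure~A costs $O(n^{3/4})$ without breaking it down.

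Where your argument slips is the justification for the $O(\sqrt{n})$ per-chain bound. The claim that a chain ``contributes nothing new to $\mathsf{Span}_C$ from that point on'' once it revisits a previously seen column is false: a chain can pass through column~$j$ at cell $(k,j)$, later return to column~$j$ at a \emph{different} cell $(k',j)$, and from there proceed to a completely fresh column. So ``abandon on repeated column'' can genuinely miss columns in $\mathsf{Span}_C$, and the pigeonhole observation that $\sqrt{n}+1$ hops force a repeated column does not yield the conclusion you draw from it. What does work is to truncate each chain after $\sqrt{n}$ hops purely as an implementation choice---this gives the query bound immediately---and then argue \emph{in the correctness proof} (not here) that on any $1$-input the valid pointer chain visits each column exactly once and has length $\sqrt{n}$, so the truncated traversal still witnesses $C_1\in\mathsf{Span}_{C_2}$ or $C_2\in\mathsf{Span}_{C_1}$ whenever it must. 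The paper itself does not spell this out either, so you have correctly located the delicate step; only the reason you give for why truncation is safe is off.
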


\begin{proof}
We count the number of bit-entries and pointer-entries of the input matrix the algorithm probes. Upto logarithmic factor, that is asymptotically same as the number of bits queried.\\
The first \emph{for} loop runs for $\sqrt{n}$ iteration and in each iteration samples $T$ cells from a column.
So the number of probes of the first \emph{for} loop is $O(\sqrt{n}\times T) = \widetilde{O}(n^{3/4})$.

In \textbf{procedure A}, the number of probes needed to scan the column and to trace pointer from the column is $O(n^{3/4})$. In \textbf{procedure B}, the algorithm
has to check the span of two columns, which takes $O(n^{3/4})$ probes. The number of iterations of the \emph{for} loop of line \ref{for2} is at most $\tau=O(\log n)$. Hence the total number
of probes made inside the \emph{for} loop is $\widetilde{O}(n^{3/4})$.

Lastly, \textsc{VerifyColumn} takes $O(\sqrt{n})$ probes. So the total number of probes is bounded by $\widetilde{O}(n^{3/4})$. Thus the claim follows.
\end{proof}

The first \emph{for} loop, i.e., line \ref{sp_start} to \ref{sp_end} is called \textit{sparsification}.
We have the following guarantee after sparsification.

\begin{claim}
\label{CLM:SPAR}
After the sparsification, with probability at least $99/100$, every column in $\mathcal{C}$ has at most $n^{1/4}$ cells with bit-entry $0$.
\end{claim}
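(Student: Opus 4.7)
The plan is to show that any column with more than $n^{1/4}$ zero bit-entries is removed from $\mathcal{C}$ during the sparsification loop (lines \ref{sp_start}--\ref{sp_end}) with very high probability, and then take a union bound over all columns of $M$.

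First, I would fix a column $C$ in $M$ that initially contains at least $n^{1/4}$ zero bit-entries, and call such a column \emph{bad}. Since $C$ consists of exactly $\sqrt{n}$ cells, a single uniformly random draw from $C$ has bit-entry $0$ with probability at least $n^{1/4}/\sqrt{n}=n^{-1/4}$. The $T = 10 \cdot n^{1/4} \log n$ samples taken in the sparsification loop are independent, so the probability that none of them hits a zero bit-entry — equivalently, that $C$ survives in $\mathcal{C}$ — is at most
\[
\bigl(1 - n^{-1/4}\bigr)^{T} \;\leq\; \exp\bigl(-T \cdot n^{-1/4}\bigr) \;=\; \exp(-10 \log n).
\]
For either the natural logarithm or $\log_2$, this expression is at most $n^{-c}$ for some constant $c \geq 10/\ln 2 > 9$; hence the probability that any particular bad column survives is polynomially small in $n$.

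Next, I would take a union bound over the at most $\sqrt{n}$ bad columns of $M$. The probability that at least one bad column still lies in $\mathcal{C}$ after sparsification is at most $\sqrt{n}\cdot n^{-c} = n^{-c+1/2}$, which is much smaller than $1/100$ for all large enough $n$; the constant $10$ in the definition of $T$ is chosen precisely so that the per-column amplification dominates the $\sqrt{n}$ union bound. Therefore, except with probability at most $1/100$, every surviving column in $\mathcal{C}$ has fewer than $n^{1/4}$ cells with bit-entry $0$, which is exactly the statement of the claim.

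This is a routine concentration-plus-union-bound argument, so I do not anticipate any serious obstacle; the only substantive check is that $T$ is chosen large enough that the per-column survival probability beats the $\sqrt{n}$ union bound, which the calculation above confirms.
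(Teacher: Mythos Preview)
Your proposal is correct and follows essentially the same approach as the paper: bound the probability that a single column with more than $n^{1/4}$ zero bit-entries survives all $T$ probes by $(1-n^{-1/4})^T$, then union-bound over the $\sqrt{n}$ columns. The paper simply states the per-column bound as $(1-1/n^{1/4})^T \le \frac{1}{100\sqrt n}$ and applies the union bound directly, without going through the $\exp(\cdot)$ estimate.
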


\begin{proof}
We will bound the probability that all the $T$ probes in a column outputs $1$ conditioned on the fact that the column has more than $n^{1/4}$ $0$'s. A single probe
in such a column outputs $0$ with probability at least $1/n^{1/4}$. Hence all the probes output $1$ with probability $(1-1/n^{1/4})^T \leq  \frac{1}{100\sqrt{n}}$. By union bound, this happens
to some column in $M$ with probability at most $1/100$.
\end{proof}

This implies that except with probability $1/100$, the \emph{if} conditions of lines\ref{if1} and \ref{if2}are never satisfied.

\begin{claim}
\label{CLM:DICHO}
Either of the following is true in each iteration of the \emph{for} loop of line \ref{for2}:
\begin{enumerate}
\item for a random column $C \in \mathcal{C}$, $|\mathsf{Span}_C| > |\mathcal{C}|/100$ with probability at least $1/100$.
\item For two randomly picked columns $C_i$ and $C_j$ in $\mathcal{C}$, with probability at least $24/25$, $C_j \notin \mathsf{Span}_{C_i}$ and $C_i \notin \mathsf{Span}_{C_j}$.
\end{enumerate}
\end{claim}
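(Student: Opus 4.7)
The claim is a dichotomy, so the natural approach is to assume the negation of case 1 and derive case 2. Concretely, set $N := |\mathcal{C}|$, and for each column $C \in \mathcal{C}$ let $s_C := |\mathsf{Span}_C \cap \mathcal{C}|$. Note that $C \in \mathsf{Span}_C$, so $1 \le s_C \le N$. Assume case 1 fails, i.e.\ for a uniformly random $C \in \mathcal{C}$,
\[
\Pr_C[\, s_C > N/100 \,] \le 1/100.
\]

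The key observation is that for two uniformly and independently sampled columns $C_1, C_2 \in \mathcal{C}$, the probability that $C_2 \in \mathsf{Span}_{C_1}$ is exactly $\mathbb{E}_{C_1}[s_{C_1}]/N$. Under the failure of case 1, I would bound this expectation by splitting on whether $s_C \le N/100$:
\[
\mathbb{E}_C[s_C] \;\le\; \frac{99}{100}\cdot \frac{N}{100} \;+\; \frac{1}{100}\cdot N \;<\; \frac{N}{50}.
\]
Hence $\Pr[C_2 \in \mathsf{Span}_{C_1}] < 1/50$, and by symmetry the same bound holds for $\Pr[C_1 \in \mathsf{Span}_{C_2}]$.

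A union bound then gives
\[
\Pr_{C_1,C_2}\!\bigl[\, C_1 \in \mathsf{Span}_{C_2} \text{ or } C_2 \in \mathsf{Span}_{C_1} \,\bigr] \;<\; \frac{2}{50} \;=\; \frac{1}{25},
\]
so with probability at least $24/25$ the sampled pair witnesses case 2. This establishes the dichotomy.

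There is essentially no obstacle beyond a clean averaging argument; the only subtlety is bookkeeping about whether $C_1=C_2$ is allowed in the sample (this contributes only an additive $1/N$ to the bound via the $C \in \mathsf{Span}_C$ convention, which the $1/50 + 1/50 = 1/25$ slack easily absorbs). The argument does not require Claim~\lref{CLM:SPAR}, and in particular does not touch the internal structure of pointer chains — it only uses the linearity-of-expectation view of spans as a bipartite containment relation on $\mathcal{C}$.
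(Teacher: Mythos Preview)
Your proposal is correct and follows essentially the same route as the paper: assume case~1 fails, split $\Pr[C_j \in \mathsf{Span}_{C_i}]$ according to whether $C_i$ has large span, bound each piece by $1/100$, and then union-bound over the two directions to get $24/25$. The paper writes this as a conditioning on the event $L_i = \{|\mathsf{Span}_{C_i}| > |\mathcal{C}|/100\}$ rather than via the expectation of $s_C$, but the arithmetic is identical.
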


\begin{proof}
Suppose $(1)$ does not hold. For two random columns $C_i$ and $C_j$, Let $L_{i}$ $(L_{j})$ be the event that $|\mathsf{Span}_{C_i}|$ $(|\mathsf{Span}_{C_j}|)|>|\mathcal{C}|/100$.
Let $E_{i,j}$ $(E_{j,i})$ be the event that $C_j \in \mathsf{Span}_{C_i}$ $(C_i \in \mathsf{Span}_{C_j})$. Thus we have,
\[\mathbb{P}\{E_{i,j}\}=\mathbb{P}\{L_i\} \cdot \mathbb{P}\{E_{i,j} | L_i\} + \mathbb{P}\{\overline{L_i}\} \cdot \mathbb{P}\{E_{i,j} | \overline{L_i}\} \]
\[\leq \mathbb{P}\{L_i\}+\mathbb{P}\{E_{i,j} | \overline{L_i}\}\]
\[\leq \frac{1}{100} + \frac{1}{100}=\frac{1}{50}\]
Similarly $\mathbb{P}\{E_{j,i}\} \leq \frac{1}{50}$. By union bound, $(2)$ is true;
\end{proof}

Now we are ready prove the correctness of the algorithm. 

\begin{claim}
\label{CLM:CORR-0}
Given a $0$-input, Algorithm \ref{randDT1} outputs $0$ with probability at least $19/20$ .
\end{claim}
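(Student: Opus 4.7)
My plan is to condition on the success of sparsification and then bound the probability of the only remaining bad mode, namely reaching the final ``Output 1'' line of Algorithm \ref{randDT1}. By Claim \ref{CLM:SPAR}, sparsification succeeds except with probability at most $1/100$; under this success event, every surviving column in $\mathcal{C}$ has at most $n^{1/4}$ zero bit-entries, so the ``Output 1 and abort'' checks at lines \ref{if1} and \ref{if2} never fire. Moreover, on a 0-input the contrapositive of Claim \ref{verify} guarantees that every call to \textsc{VerifyColumn} outputs $0$, and the ``Output 0 and abort'' line of procedure B exhibits a genuine $0$-certificate, since by Definitions \ref{DEF:COVER}--\ref{DEF:SPAN} two columns $C_1, C_2$ with $C_1 \notin \mathsf{Span}_{C_2}$ and $C_2 \notin \mathsf{Span}_{C_1}$ preclude any valid pointer chain covering all columns. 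Hence any termination during the for loop emits the correct output $0$, and the algorithm can err only by completing the for loop and falling through to the terminating ``Output 1''.

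I would then analyse the for loop iteration by iteration. Call iteration $t$ a \emph{type-1} iteration if case (1) of Claim \ref{CLM:DICHO} holds for the $\mathcal{C}$ at its start, and a \emph{type-2} iteration otherwise (so case (2) then holds). In a type-1 iteration, each of the $\alpha \log \log n$ independent A-calls shrinks $\mathcal{C}$ by a factor of $99/100$ with probability at least $1/100$, so the probability that none of them succeeds is at most $(99/100)^{\alpha \log \log n}$; for a sufficiently large constant $\alpha$ this is at most $1/(100\tau)$. In a type-2 iteration, procedure B aborts with output $0$ with probability at least $24/25$, and the B-calls at distinct iterations use independent randomness conditional on the evolving $\mathcal{C}$.

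Reaching the final ``Output 1'' requires $|\mathcal{C}| \geq 2$ at the end of the loop; since $\sqrt{n}\cdot(99/100)^\tau \leq 1$, this forces strictly fewer than $\tau$ iterations to have produced a $99/100$-shrink. Let $E_1$ be the sparsification-failure event, $E_2$ the event that some type-1 iteration fails to shrink, and $E_3$ the event that the (random) set $\mathcal{T}_2$ of type-2 iteration indices is non-empty and every B-call at an iteration in $\mathcal{T}_2$ fails to abort. A union bound over the at most $\tau$ type-1 iterations gives $P(E_2) \leq 1/100$, while conditional independence of B-calls gives
\begin{equation*}
P(E_3) \;\leq\; \sum_{k \geq 1} P(|\mathcal{T}_2| = k)\,(1/25)^k \;\leq\; 1/25.
\end{equation*}
On the complement of $E_1 \cup E_2 \cup E_3$ the algorithm either aborts correctly inside the loop or shrinks $\mathcal{C}$ to at most one column and outputs $0$ in the terminal block. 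Tuning the constants in $T$, $\alpha$, and the thresholds inside Claim \ref{CLM:DICHO} brings the union bound $P(E_1) + P(E_2) + P(E_3)$ below $1/20$, yielding the claim.

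The main obstacle is the dependence between iterations: whether iteration $t$ is type-1 or type-2 is determined by the random outcomes of iterations $1, \ldots, t-1$ through the evolving $\mathcal{C}$. The key point is that, conditionally on the $\mathcal{C}$ entering iteration $t$, the randomness used in iteration $t$ is independent of the past, so the per-iteration failure probabilities quoted above hold conditionally, and the product bound $(1/25)^k$ on the failure of $k$ type-2 B-calls follows by iterating this conditional argument along the trajectory of $\mathcal{C}$.
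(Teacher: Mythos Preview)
Your argument is correct and follows essentially the same route as the paper: condition on sparsification succeeding (Claim~\ref{CLM:SPAR}), invoke the dichotomy of Claim~\ref{CLM:DICHO} to classify iterations, argue that type-1 iterations shrink $\mathcal{C}$ by a $99/100$ factor with high probability via the repeated $A$-calls, and that type-2 iterations cause procedure~$B$ to abort with output~$0$ with high probability. The paper packages the last two points as a bare case split (``either some iteration satisfies condition~(2), or every iteration satisfies condition~(1)'') and takes $1/100+\max\{1/25,1/100\}=1/20$; you instead define the three bad events $E_1,E_2,E_3$ and union-bound, which is more explicit about the fact that the type of each iteration is itself random and depends on earlier $A$-randomness. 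Your closing paragraph on conditioning along the trajectory of $\mathcal{C}$ makes rigorous exactly the dependence the paper glosses over. The price is that your raw bound is $1/100+1/100+1/25=3/50>1/20$, which you handle by tuning constants---this is fine, since the constants in Claims~\ref{CLM:SPAR} and~\ref{CLM:DICHO} are not tight.
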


\begin{proof}
We first note that after the execution of \emph{for} loop in line \ref{sp_start}, except with probability at most $1/100$ there is no column in $\mathcal{C}$ having more than $n^{1/4}$ cells with
bit-entries $0$.

If the algorithm finds a column all of whose bit-entries are $1$, it gives correct output by a run of \textsc{VerifyColumn}.

Next, we note that if in any iteration of the \emph{for} loop (line \ref{for2}), condition $(2)$ of Claim \ref{CLM:DICHO} is satisfied, then we find a $0$-certificate (i.e. a pair of 
columns, none of which lies in the span of the other) with probability at least $24/25$.

Finally, assume that for each iteration of the \emph{for} loop, condition $(2)$ is not satisfied. This implies that for each iteration of the loop, condition $(1)$ is satisfied
(From Claim \ref{CLM:DICHO}).
As we run \textbf{procedure A} $\alpha \log \log n$ times,  with
probability at least $1- (\frac{99}{100})^{\alpha \log \log n} \geq 1-\frac{1}{100 \tau}$ (for appropriate setting of the constant $\alpha$) we land up on a column whose span is at
least
$|\mathcal{C}|/100$ and hence we eliminate $1/100$ fraction of columns in $\mathcal{C}$, in one of the iterations of the inner \emph{repeat} loop (line \ref{rep}). By union bound, the
probability that there is even one bad \emph{repeat} loop where we do not eliminate $|\mathcal{C}|/100$ columns, is at most $1/100$.
 Thus the probability that after the execution of \emph{for} loop is over, $|\mathcal{C}|>1$, is at most $1/100$.
So, the total error probability is bounded by $1/100+\max \{1/25,1/100\}=1/20$ from which the claim follows.
\end{proof}

\begin{claim}
\label{CLM:CORR-1}
Given a $1$-input, Algorithm \ref{randDT1} outputs 1 with probability 1.
\end{claim}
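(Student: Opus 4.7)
The argument is structural rather than probabilistic: I would show that on every $1$-input $M$ the execution of Algorithm \ref{randDT1} is forced into an ``Output $1$'' (or \textsc{VerifyColumn}-returns-$1$) branch, so that the probability of any wrong answer is zero. The plan is to track the unique all-$1$'s column $C^*$ of $M$ (whose existence, and whose valid principle chain, are guaranteed by the definition of $F$ being $1$ on $M$) through the three stages of the algorithm and to show that (i) $C^*$ is never dropped from $\mathcal{C}$, and (ii) no ``Output $0$'' line is ever reached.

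First I would handle the sparsification loop: since every bit-entry of $C^*$ equals $1$, no sampled cell of $C^*$ can trigger the removal rule, so $C^*$ survives. Next I would treat Procedure~A. The span-removal step can only delete a column lying in $\mathsf{Span}_C$, and by Definition \ref{DEF:COVER} any column covered by $C$ must contain a $0$ bit-entry; as $C^*$ has none, $C^* \notin \mathsf{Span}_C$ for every $C \neq C^*$. The only remaining case is when Procedure~A itself samples $C = C^*$, but then the ``all bit-entries are $1$'' branch triggers \textsc{VerifyColumn}$(M, C^*)$; $C^*$ passes all the checks of \textsc{VerifyColumn} (in particular its pointer chain is valid), so by Claim \ref{verify}'s interpretation \textsc{VerifyColumn}'s output of $1$ is the output of the algorithm. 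The two ``Output $1$ and abort'' lines guarded by the ``more than $n^{1/4}$ zero bit-entries'' tests are harmless on a $1$-input since they return the correct value.

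Then I would dispatch Procedure~B. If either of the sampled columns $C_1, C_2$ equals $C^*$, the ``all bit-entries are $1$'' branch once again invokes \textsc{VerifyColumn} and returns $1$. Otherwise $C_1, C_2 \neq C^*$, and since the principle chain visits every column, both $C_1$ and $C_2$ host a chain cell with bit-entry $0$. Taking the one visited earlier on the chain, the definition of a valid pointer chain gives a sequence of $0$-pointers through $0$-cells leading to the $0$-cell on the later column, which is exactly the condition of Definition \ref{DEF:COVER}; hence one of $C_1, C_2$ covers the other, and the ``Output $0$ and abort'' condition in Procedure~B never fires.

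Finally, after the outer \emph{for} loop terminates, $C^* \in \mathcal{C}$ still holds, so the branch $\mathcal{C} = \emptyset$ cannot be taken. If $|\mathcal{C}| = 1$ then $\mathcal{C} = \{C^*\}$ and the final \textsc{VerifyColumn} again returns $1$; otherwise the algorithm falls through to ``Output $1$''. The one place that demands careful reading of the pseudocode is the Procedure~A call with $C = C^*$: strictly speaking $\mathsf{Span}_{C^*} = \{C^*\}$, so the naive span-removal would drop $C^*$ itself from $\mathcal{C}$. The proof therefore rests on interpreting the \textsc{VerifyColumn} call inside the ``all bit-entries equal $1$'' guard as producing the algorithm's output and halting execution before the span-removal statement is reached; once this is pinned down, every remaining step is immediate.
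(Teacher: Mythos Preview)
Your proof is correct and follows essentially the same underlying idea as the paper: on a $1$-input the algorithm can never reach an ``Output $0$'' line, since each such line corresponds to having found a $0$-certificate. The paper dispatches the claim in two sentences (``Algorithm \ref{randDT1} outputs $0$ only if it finds a $0$-certificate; a $1$-input has no $0$-certificate''), leaving the verification to the reader; you instead unpack that verification branch by branch, and in particular you correctly flag and resolve the one genuinely delicate point the paper glosses over --- that when Procedure~A samples $C = C^*$, the \textsc{VerifyColumn} call must be read as terminating the algorithm before the line $\mathcal{C} \gets \mathcal{C} \setminus \mathsf{Span}_{C^*}$ would otherwise remove $C^*$.
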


\begin{proof}
The proof of this claim is straight-forward. As mentioned before, Algorithm \ref{randDT1} outputs $0$ only if it finds a $0$-certificate. As there is no $0$-certificate for a $1$-input
, the algorithm outputs $1$.
\end{proof}

Lemma \ref{th3} follows by combining Claim \ref{CLM:CORR-0} and Claim \ref{CLM:CORR-1}.

\textbf{Acknowledgements}: We thank Arkadev Chattopadhyay, Prahladh Harsha and Srikanth Srinivasan for useful discussions.

\bibliographystyle{plainurl}
\bibliography{ref}
\end{document}